\documentclass[10pt]{article}
\usepackage{style}
\usepackage{subcaption}

\AtEveryBibitem{ %
    \clearfield{day}
    \clearfield{month}
    \clearfield{series}
    \clearfield{venue}
    \clearname{editor}
    \clearlist{publisher}
    \clearlist{location} %
    \clearfield{venue}
    \clearfield{issn}
    \clearfield{isbn}
    \clearfield{urldate}
    \clearfield{eventdate}
    \clearfield{pages}
    \clearfield{number}
    \clearfield{volume}
}

\makeatletter
\newcommand{\symbolthanks}[2]{%
  \begingroup
    \renewcommand{\thefootnote}{#1}%
    \footnotemark
    \protected@xdef\@thanks{%
      \@thanks
      \protect\symbolthanks@text{#1}{\the\c@footnote}{#2}}%
  \endgroup
}

\newcommand{\symbolthanks@text}[3]{%
  \begingroup
    \renewcommand{\thefootnote}{#1}%
    \footnotetext[#2]{#3}%
  \endgroup
}
\makeatother

\usepackage{setspace}
\newcommand{\AES}{\mathsf{AES}}

\newcommand{\sgn}{\operatorname{sgn}}

\allowdisplaybreaks

\newcommand{\PSPACE}{\mathsf{PSPACE}} 
\newcommand{\RE}{\mathsf{RE}} 
\newcommand{\AM}{\mathsf{AM}} 
 
\newcommand{\QIP}{\mathsf{QIP}} 
\newcommand{\NEXP}{\mathsf{NEXP}} 
\newcommand{\MIP}{\mathsf{MIP}}
\newcommand{\QMIP}{\mathsf{QMIP}}
\newcommand{\QCMA}{\mathsf{QCMA}}
\newcommand{\QMA}{\mathsf{QMA}}
\newcommand{\BQP}{\mathsf{BQP}}
\newcommand{\SBQP}{\mathsf{SBQP}}

\renewcommand{\eqref}[1]{\text{eq.~}(\ref{#1})}

\title{A quantum oracle separation between $\QMA(2)$ and $\QMA$}
\author{
John Bostanci\thanks{Simons Institute \& UC Berkeley. \texttt{cbostanc@berkeley.edu}} 
\and Sabee Grewal\thanks{IBM Research, Columbia University. \texttt{sabee@ibm.com}} 
\and Jonas Haferkamp\thanks{Ruhr-University Bochum. \texttt{jonas.haferkamp@rub.de}}
\and Andrew Huang\thanks{University of California, Berkeley. \texttt{a\_huang@berkeley.edu}}
\and Yeongwoo Hwang\symbolthanks{\ensuremath{\Delta}}{Harvard University. \texttt{yeongwoohwang@g.harvard.edu}}
\and Anand Natarajan\thanks{MIT. \texttt{anandn@mit.edu}}
\and Chinmay Nirkhe\thanks{University of Washington. \texttt{nirkhe@cs.washington.edu}}
}

\date{}

\begin{document}

\maketitle

\begin{abstract}
    We find a quantum oracle relative to which $\QMA\neq\QMA(2)$. As a consequence, we resolve the no-disentanglers conjecture of Watrous:    
    for every $\eps+\delta<1$, any $(\eps,\delta)$-disentangler requires input size exponential in the number of output qubits.
    Our proof combines the unitarily invariant polynomial method of She and Yuen (ITCS~'23) with a new construction based on the symmetric and antisymmetric subspace projectors, reducing the $\QMA$ lower bound to the approximate degree of $\mathrm{OR}$.
\end{abstract}

\newif\iftoc\tocfalse

\iftoc
\newpage
\fi

\iftoc
\hypersetup{linktocpage}
\setcounter{tocdepth}{2}
{\tableofcontents}
\setlength{\parskip}{4pt}%
\newpage
\fi

\section{Introduction}
Quantum information can fundamentally alter the structure and power of proof systems.
Two major results illustrate this especially clearly. 
First, quantum interactive proofs can be parallelized to three messages, $\QIP=\QIP(3)$~\cite{KW00}, whereas an analogous parallelization for classical interactive proofs would imply $\PSPACE = \AM$, collapsing $\PSPACE$ to the second level of the polynomial hierarchy.
Second, and perhaps even more dramatically, $\MIP=\NEXP$ while $\QMIP = \MIP^*=\RE$, so allowing entanglement between provers increases the power of multiprover proofs all the way to undecidable problems~\cite{BFL91,JNV+21}.

In this work, we study a more mysterious quantum resource: \emph{unentanglement}. 
The class $\QMA(2)$ consists of problems that can be verified efficiently given two quantum proofs that are promised to be unentangled with one another~\cite{KMY01}. 
Whereas entanglement is usually viewed as a source of quantum power, $\QMA(2)$ asks whether the \emph{absence} of entanglement can improve verification for untrusted quantum systems.

Understanding the power of $\QMA(2)$ is a major open problem in quantum complexity theory and is closely connected to fundamental questions about entanglement, separability, and optimization; see the recent survey of Jeronimo, Wu, and Leigh~\cite{JWL26}. 
Despite 25 years of study, at the level of a general complexity-class containment, essentially all that is known is $\QMA\subseteq\QMA(2)\subseteq\NEXP$,
and there is surprisingly little evidence to suggest whether $\QMA(2)$
is ``closer'' in power to $\QMA$ or to $\NEXP$.

A basic test of whether a promise of unentanglement improves verification is whether it can yield an advantage in the black-box setting.
Can a $\QMA(2)$ verifier solve a quantum oracle problem efficiently that no $\QMA$ verifier can?
Such a separation would isolate unentanglement as a genuine source of computational power in proof systems: the two proof systems are given identical black-box access to the problem and differ only in the structure imposed on their proofs.
This question has remained open for nearly two decades~\cite{AK07,ABD+08},
and Aaronson~\cite{Aar21} highlighted it as a fundamental question in quantum query complexity.
We resolve one version\footnote{A classical oracle separation still eludes us!} of this question by giving the first unitary-oracle separation between
$\QMA$ and $\QMA(2)$.
\begin{theorem}[Informal]
\label{thm:oracle-separation-informal}
There exists a unitary oracle $\calO$ such that
\[
\QMA^\calO \neq \QMA(2)^\calO\,.
\]
\end{theorem}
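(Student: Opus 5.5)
We will build the separation from a distinguishing problem between two families of unitary oracles, handled by a $\QMA(2)$ upper bound and a $\QMA$ lower bound, and then lift it to a language by diagonalization.

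\textbf{The oracle problem.} For each $n$, a \textsc{yes} instance is an oracle that reflects about (or marks) a hidden Haar-random $n$-qubit pure state $\ket{\psi}$, i.e.\ $\calO_\psi = I - 2\ketbra{\psi}{\psi}$. A \textsc{no} instance is the identity, or a reflection about a random subspace of comparable structure so that the two cases cannot be told apart cheaply without a witness. The oracle is also allowed controlled access and access to its conjugate. The intended $\QMA(2)$ protocol receives two unentangled registers, each claimed to be $\ket{\psi}^{\otimes k}$.

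\textbf{Upper bound.} The verifier runs a product test (swap tests across the two copies) to enforce that each proof is close to a product of identical copies $\ket{\phi}^{\otimes k}$. It then uses the oracle to check that $\ket{\phi}$ has large overlap with the marked state. The unentanglement is what makes this work: combined with the symmetric-subspace projector $\Pi_{\mathrm{sym}}$, it forces the accepted proofs to be nearly product states, as in the Harrow--Montanaro product test. In the \textsc{no} case, no product proof passes the oracle check. In a single-proof setting, by contrast, the prover could send an entangled state that spreads amplitude across the symmetric and antisymmetric parts. This is why the construction pairs $\Pi_{\mathrm{sym}}$ with $\Pi_{\mathrm{anti}}$: the \textsc{no} instances are designed so that checks based on $\Pi_{\mathrm{anti}}$ are consistent for entangled witnesses but fail for product witnesses.

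\textbf{Lower bound.} We will use the unitarily invariant polynomial method of She and Yuen. Fix a $T$-query $\QMA$ verifier $V$ with witness $\ket{w}$. Its maximum acceptance probability is $\lambda_{\max}$ of a POVM element $M(\calO)$ whose entries are degree-$2T$ polynomials in the oracle entries. After amplification, $\lambda_{\max}$ can be replaced by $\operatorname{tr} M^{p}$ or by guessing the witness, which costs a factor of the dimension. This lets the acceptance bias be written as a polynomial in the oracle that, after averaging over Haar-random conjugation, becomes a polynomial in a few unitarily invariant quantities, such as the rank or overlap parameter distinguishing \textsc{yes} from \textsc{no}. Along a one-parameter family of instances interpolating between the cases, this polynomial must behave like the $\mathrm{OR}$ function on $N = 2^n$ inputs. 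Its approximate degree is $\Omega(\sqrt N)$, so $T \cdot (\text{witness-amplification overhead}) = 2^{\Omega(n)}$. This forces either exponentially many queries or an exponentially long witness.

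\textbf{Main obstacle.} The hardest step is the reduction from the symmetrized polynomial to $\mathrm{OR}$, done in a way that tolerates the witness maximization. The verifier can exploit the oracle's own eigenstates, so we must show that the symmetric/antisymmetric construction makes the witness-optimized acceptance still a low-degree symmetric function of the marked-set indicator. To do this, we would first bound the $\QMA$ verifier via the standard "guess the witness" reduction to a $\BQP$-type quantity with soundness loss $2^{-\mathrm{poly}}$. We would then verify that the completeness/soundness gap survives, which requires that the gap for $\mathrm{OR}$ be robust to $2^{-\mathrm{poly}(n)}$ error. Finally, a standard diagonalization over all polynomial-time verifiers, choosing a fresh input length for each, converts the query separation into $\QMA^\calO \neq \QMA(2)^\calO$.
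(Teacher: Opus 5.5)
\textbf{There is a genuine gap, starting with the oracle problem itself.} As you state it, YES is $I-2\proj{\psi}$ for a hidden Haar-random $\ket{\psi}$ and NO is the identity or a random subspace. Then either the problem is easy for $\QMA$, or unentanglement gives no advantage:
\begin{itemize}
\item If NO is the identity, the prover sends $\ket{\psi}$ and the verifier measures $\{P,I-P\}$ with one query. This accepts with probability $1$ in YES and $0$ in NO, so no $\QMA$ lower bound is possible.
\item If NO is a nonempty random subspace, a single prover just sends a state in it.
\end{itemize}
The promise must be about \emph{product} states across a fixed cut. The oracle is $I-2P$ on $\C^D\otimes\C^D$. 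YES instances have a product state in the range of $P$; the paper uses local-unitary conjugates of a rank-one product projector. NO instances may have large range but satisfy $h_{\rm Sep}(P)\le 1/3$; the paper uses local-unitary conjugates of $\Pi^{d,4}_{\mathrm{anti}}$ across $(12\mid 34)$ with $d=5j-1$, and proves the $1/3$ bound via the expansion $6\bra{\psi}\Pi_d^-\ket{\psi}=1-4\tr(\rho_\alpha\rho_\beta)+\abs{\braket{\alpha}{\beta}}^2$. The $\QMA(2)$ verifier is then trivial: one query measuring $\{P,I-P\}$ on the product witness. No product test or copies $\ket{\phi}^{\otimes k}$ are needed; unentanglement enters only through soundness.

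\textbf{The second, central gap is the step you flag as the main obstacle.} After Haar averaging, She--Yuen give only a linear combination of many invariants $J_{\pi,\sigma}(P)=\tr[P^{\otimes t}(R^A_\pi\otimes R^B_\sigma)]$. Nothing in your plan makes these collapse to one parameter on YES and NO instances simultaneously. That collapse is the paper's key idea:
\begin{itemize}
\item Expanding $(\Pi_d^-)^{\otimes t}$ into signed permutations and using $\tr R_\rho=d^{\cyc(\rho)}$ gives $J_{\pi,\sigma}(\Pi_d^-)=q_{\pi,\sigma}(d)$ with $\deg q_{\pi,\sigma}\le 4t$.
\item Since $\pi,\sigma$ permute two-system blocks, the induced permutation is even.
\item Since $(-d)^{\cyc(\rho)}=\sgn(\rho)d^{\cyc(\rho)}$ for $\rho\in S_{4t}$, one gets $q_{\pi,\sigma}(-d)=J_{\pi,\sigma}(\Pi_d^+)$.
\item $\Pi_1^+$ is a rank-one product projector.
\end{itemize}
So one univariate polynomial takes the YES value at $d=-1$ and the NO values at $d=4,9,14,\dots$.

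\textbf{The polynomial must also have the ``complement of OR'' shape.} After the trace-power or guessing step, YES values can lie anywhere in $[c^\ell,2^m]$. You therefore need a single isolated YES point to normalize by, $p(x)=1-h(5x-1)/h(-1)$, and a long NO runway with $p(j)\in[1-2^{-(m+1)},1]$. The $\Omega(\min\{N,\sqrt{N\log(1/\eps)}\})$ form of the OR bound then yields $q^2(m+1)=\Omega(N)=\Omega(\sqrt D)$. An OR-shaped problem with many YES points, as ``behaves like OR on $N=2^n$ inputs'' suggests, does not survive the $2^m$ rescaling. Your diagonalization step is fine once these pieces are in place.
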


In particular, we introduce a family of black-box problems that a
$\QMA(2)$ verifier can solve using a single query and linear-size proofs,
whereas any $\QMA$ verifier must use either exponentially many queries or
an exponentially large proof.

The same $\QMA$ query lower bound also has consequences beyond the oracle setting.
Specifically, it gives strong lower bounds on approximate
\emph{disentanglers}: quantum channels whose outputs are always close to
separable states and whose images approximately contain every separable state. 
Watrous observed (as reported in~\cite{ABD+08}) that in the query model a disentangler with only polynomially many input qubits would allow $\QMA(2)$ to be simulated by $\QMA$. 
Our query separation thus implies that no such disentanglers exist, as Watrous conjectured.\footnote{Aaronson et al.~\cite{ABD+08} formulated a stronger quantitative conjecture, suggesting a
lower bound of $\dim(\mathcal H)=2^{\Omega(\dim(\mathcal K))}$ for a channel from $\mathcal H$ to $\mathcal K\otimes\mathcal K$.
Recently, Jeronimo, Wu, and Xu~\cite{JWX26}
refuted this quantitative conjecture by constructing, for every fixed
$\eps\in(0,1)$, an $(\eps,0)$-disentangler satisfying
\[
\dim(\mathcal H) = \exp\!\left( O_\eps\!\left( \sqrt{\dim(\mathcal K)}\log\dim(\mathcal K) \right) \right)\,.
\]}

\begin{theorem}[Informal]
\label{thm:disentangler-informal}
For every fixed $\eps,\delta\geq0$ with $\eps+\delta<1$, an
$(\eps,\delta)$-disentangler requires exponentially many input qubits in
the number of output qubits.
\end{theorem}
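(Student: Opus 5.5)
The plan is to derive the statement from the $\QMA$ query lower bound behind Theorem~\ref{thm:oracle-separation-informal}, using Watrous's simulation argument. Concretely, the oracle problem will be a promise problem on a unitary oracle $\calO$ acting on $n$ qubits. A $\QMA(2)$ verifier $V$ decides it with one query and two unentangled proofs of $m=O(n)$ qubits each. In the YES case some product state $\ket{\psi}\otimes\ket{\phi}$ is accepted with probability at least $c$. In the NO case every separable state, and in particular every product state, is accepted with probability at most $s$. We need $c-s$ to be a constant close to $1$, which we get by amplifying the $\QMA(2)$ verifier; amplification for two unentangled proofs is available by Harrow--Montanaro product testing, or we use the direct construction if its completeness and soundness are already near $1$ and $0$. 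The lower bound we will invoke says that any $\QMA$ verifier for the problem with constant gap needs either $2^{\Omega(n)}$ queries or a $2^{\Omega(n)}$-qubit proof. Fixing these parameters is the first step.

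The second step builds a $\QMA$ verifier from an $(\eps,\delta)$-disentangler $\Lambda:\mathcal H\to\mathcal K\otimes\mathcal K$ with $\mathcal K=(\mathbb C^2)^{\otimes m}$. The verifier takes a proof $\rho$ on $\mathcal H$, applies $\Lambda$ (which is oracle-independent, so it costs no queries), and runs $V$ on $\Lambda(\rho)$. Completeness follows because $\Lambda$ approximately contains every separable state: some $\rho$ has $\Lambda(\rho)$ within trace distance $\delta$ of $\ket{\psi}\bra{\psi}\otimes\ket{\phi}\bra{\phi}$, so the acceptance probability is at least $c-\delta$. Soundness follows because every output is $\eps$-close to a separable state $\sigma$, so the acceptance probability is at most $s+\eps$. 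The gap is then $(c-s)-(\eps+\delta)$, which is a positive constant once $c-s>\eps+\delta$. Since $\eps+\delta<1$, amplifying $c\to1$ and $s\to0$ first makes this hold. We then amplify the resulting $\QMA$ verifier to the constant gap the lower bound requires, using standard Marriott--Watrous amplification, which preserves the proof length and multiplies the number of queries by a constant.

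The resulting $\QMA$ verifier makes $O(1)$ queries and uses a proof of $\log\dim\mathcal H$ qubits. The lower bound therefore forces $\log\dim\mathcal H\ge 2^{\Omega(n)}$. Because $m=O(n)$ and $m\ge n$ can be assumed by padding, this gives $\log\dim\mathcal H=2^{\Omega(m)}$, i.e.\ exponentially many input qubits in the number $2m$ of output qubits. For output sizes not of the form arising from the problem, we embed into the nearest size, since restricting or padding a disentangler preserves its parameters.

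The main obstacle is the amplification bookkeeping. We must reach arbitrary $\eps+\delta<1$, which requires the $\QMA(2)$ protocol to have completeness near $1$ and soundness near $0$ while keeping proof length $O(n)$ and $O(1)$ queries. If product-test amplification is too costly, the first fallback is to tune the symmetric/antisymmetric-subspace construction directly so that it has perfect completeness and soundness $o(1)$. A second fallback is to use $k$ parallel copies with a product test; this keeps $m=O(n)$ for constant $k$, and the lower bound, being in terms of $n$, still yields $2^{\Omega(m)}$.
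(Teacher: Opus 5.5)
Your skeleton is the paper's. You run the one-query $\QMA(2)$ verifier (completeness $1$, soundness $1/3$) on $\Lambda(\rho)$, which gives a $\QMA$ verifier with completeness $c-\delta$, soundness $s+\epsilon$ and the same number of queries, and then you invoke $q^2(m+1)=\Omega(\sqrt D)$. For $\epsilon+\delta<2/3$ this already finishes the proof with no amplification. The proposal is incomplete in the regime $\epsilon+\delta\ge 2/3$. You flag this yourself as the main obstacle and leave it to three options, none of which you work out.

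Citing Harrow--Montanaro does not close this on its own. A verifier that mixes in the product test with constant probability $p$ accepts identical product witnesses with probability at least $p$, since such witnesses always pass the test. So $\QMA(K)=\QMA(2)$ by itself cannot push soundness below $p$. To get a gap above an $\epsilon+\delta$ arbitrarily close to $1$ you need a genuine amplification theorem. You would also have to check that, for constant target error, it costs $O(1)$ queries and $O(n)$ proof qubits; a polynomial blowup in proof length would turn $2^{\Omega(n)}$ into something subexponential in the output size.

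Your ``$k$ parallel copies'' fallback, done naively, needs the product value to be submultiplicative under tensor powers, which is false in general. The missing idea, which the paper uses, is that plain parallel repetition does work for these particular NO instances:
\begin{itemize}
\item Every vector in $\wedge^4\mathbb{C}^d$ is antisymmetric under the transpositions $(13)$ and $(24)$. Hence $\Pi_d^-\preceq \Pi^{(13)}_{\mathrm{anti}}\Pi^{(24)}_{\mathrm{anti}}$, and both factors straddle the cut $(12|34)$.
\item Therefore $(\Pi_d^-)^{\otimes k}$ is dominated by a tensor product of $2k$ antisymmetric projectors, each acting on one system from each side. By Christandl--Schuch--Winter, its product value is at most $(3/4)^{k}$.
\item Measuring $P$ on each of $k$ pairs and accepting iff all accept gives perfect completeness and soundness $(3/4)^k$. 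It uses $k=\lceil\log_{4/3}(2/\eta)\rceil$ queries and local witness dimension $D^k$.
\item Rescaling $D'=D^{1/k}$ then yields $m\ge D^{\Omega(1/k)}$, which is still exponential in the number of output qubits.
\end{itemize}
Your other fallback is also viable, and the paper mentions it. There you use $\Pi^{d,2k}_{\mathrm{anti}}$ split into two halves of even size $k$, and the negative-dimension sign argument still goes through. It would, however, need its own product-value bound.
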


Before our work, unconditional lower bounds for general approximate
disentanglers were known only in much more restricted parameter regimes.
Aaronson et al.~\cite{ABD+08} ruled out perfect $(0,0)$-disentanglers in
every finite dimension, while Harrow, Natarajan, and Wu~\cite{HNW19}
obtained a quasipolynomial lower bound on the input dimension when $\eps+\delta$ is exponentially small in the number of output qubits.
For constant error, Akibue, Kato, and Tani~\cite{AKT25} proved an
exponential lower bound for the more restrictive class of
\emph{strong} disentanglers.

\pdfbookmark[1]{Acknowledgments}{acknowledgments}
\paragraph*{Acknowledgments.}
SG is supported by the Herman Goldstine Memorial Postdoctoral Fellowship. AN is supported by a Sloan Fellowship and NSF CAREER award 2339948. CN is supported by NSF CAREER award IIS2541127. YH is supported by NSF Award Nos. 2238836, 2430375, and an IBM PhD Fellowship. This work was initiated [in part] while the author(s) were visiting the Simons Institute for the Theory of Computing, supported by NSF QLCI Grant No. 2016245.

\pdfbookmark[1]{Tool and computational resource disclosure}{tool-disclosure}
\paragraph*{Tool and computational resource disclosure.}
A research manuscript should record not only its findings, but also how those findings were obtained. For AI-assisted research, we believe such disclosure gives reviewers relevant context for evaluating the work, provides transparency about the research process as the community develops norms around the use of these tools, and helps establish an accurate understanding of their capabilities and limitations.

Accordingly, we disclose that the proof idea underlying the main theorem was generated using ChatGPT 5.6 Sol. 
Our initial prompts directed the model to the work of She and Yuen~\cite{SY22}, and, through subsequent prompting with minimal additional guidance, the model proposed the proof idea presented in this paper.
The authors subsequently verified, simplified, and developed the argument presented here, and take full responsibility for its correctness, and exposition.

\section{Overview}
Any computational power of $\QMA(2)$ over $\QMA$ must be derived from its soundness guarantee: a $\QMA(2)$ verifier may assume that even a dishonest proof is a product state. 
This suggests a natural unitary query problem for separating the two classes.
Consider a black-box unitary that marks a subspace, with the promise that either the marked subspace contains a product state, or every product state has small overlap with it. 
A $\QMA(2)$ verifier can solve such a problem trivially: 
any witness is guaranteed to be product, so it need only test whether the received state lies in the marked subspace. 
For a $\QMA$ verifier, by contrast, this test need not be sound, since the witness may be an arbitrary entangled state.

The main challenge in implementing this idea is to construct a family of marked subspaces for which every query-efficient verifier fails on some instance. %
Our approach combines the unitary polynomial method of She and
Yuen~\cite{SY22}---an extension of the standard polynomial method of
Beals et al.~\cite{BBC+01}---with a highly symmetric family
of marked subspaces. 
We view the identification of this family, together with the 
``miraculous'' symmetry properties that drastically simplify the polynomials arising in She and Yuen's method, as the main technical innovation of this work. 

\subsection{A starting point: guessing the witness}

A useful starting point is to eliminate the witness in
$\QMA$ by the standard technique of amplifying and guessing the
witness. Suppose a query problem has a $T$-query $\QMA$ algorithm with
completeness $2/3$ and soundness $1/3$, with a witness consisting of $m$
qubits. By applying the in-place amplification procedure of
Marriott-Watrous, this implies an $O(Tm)$-query $\QMA$ algorithm with
completeness $1 - 2^{-2m}$ and soundness $2^{-2m}$, that still takes a
witness of $m$ qubits. 

We can now eliminate the witness by running this verifier on a \emph{uniformly random} witness.
In the YES case, the acceptance probability is at least $2^{-m}(1 - 2^{-2m}) \geq
2^{-(m+1)}$, and in the NO case, it is at most $2^{-2m}$.
Although both acceptance-probability thresholds are exponentially close, they are still far apart in a \emph{multiplicative} sense: their ratio is exponentially large.
What we will show is that there is no witness-free quantum algorithm
making polynomially many queries whose acceptance probability
satisfies this multiplicative-error condition. (Technically, what we are doing is using the containment $\QMA \subseteq \SBQP$, and showing a lower bound against $\SBQP$.) We will show this by
applying a version of the polynomial method, as we will describe in
the next section.

Before going on, it is worth noting that the technique of guessing the
witness does \emph{not} apply to $\QMA(2)$. In fact, the first step of
in-place amplification breaks down: it is known~\cite{ABD+08} that any in-place
amplification procedure for $\QMA(2)$ would imply that $\QMA(2)
\subseteq \QMA$, and this argument holds even in the oracle setting. Thus, our oracle separation implies that any in-place amplification for $\QMA(2)$ in the plain model, if it exists, must be non-relativizing.

\subsection{The polynomial method}

The polynomial method of Beals et al.~\cite{BBC+01} starts from the observation that the acceptance probability of a quantum algorithm making $T$ queries to a classical oracle $(x_1, \dots, x_N)$ is a degree-$2T$ multivariate polynomial $p(x_1, \dots, x_N)$ in the oracle bits. 
Thus, to prove a query lower bound, it suffices to show that no low-degree polynomial has the acceptance behavior of the quantum algorithm. 
Since multivariate polynomials are hard to reason about, it is often crucial in these arguments to find symmetries in the problem that let us reduce the number of variables as much as possible---ideally to one.

The prototypical example of this method in action is the $\mathsf{BQP}$ query lower bound for the $\mathrm{OR}$ problem (also known as the unstructured search problem). 
The problem is to distinguish the NO case, in which every entry of the oracle is $0$, from the YES case, in which at least one entry is $1$. 
This problem is invariant under permutations of the oracle bits: whether the input $(x_1, \dots, x_N)$ is a YES or NO instance depends only on its \emph{Hamming weight} $x_1 + \dots + x_N$.
To apply this symmetry, let us consider the \emph{averaged} acceptance probability of the algorithm over all instances of Hamming weight $r$:
\begin{equation} f(r) := \E_{(x_1, \dots, x_N) \in \{0,1\}^N: \sum_i x_i = r} p(x_1, \dots, x_N)\,. \label{eq:symmetrization-grover} \end{equation}
Basic results in invariant theory imply that $f$ itself must be a
degree-$2T$ univariate polynomial in $r$. 

To obtain a query lower
bound, observe that the success conditions require $f(0)$ to be close to $0$, while $f(1),\dots,f(N)$ must all be close to $1$.
This means the graph of $f$ must make a 
sharp ``jump'' near $r =0$, followed by a long flat ``runway.'' 
A low-degree polynomial cannot have this shape, and standard bounds imply that in fact $f$ must have degree at least $\Omega(\sqrt{N})$. 

The same picture can be adapted to $\QMA$ using the amplify-and-guess argument mentioned above. 
Consider, for instance, the complement of $\mathrm{OR}$, where $r=0$ is the YES case and every $r \ge 1$ is the NO case.\footnote{Note that the argument we are about to discuss does not apply to $\mathrm{OR}$, which is actually in $\QMA$! The reason is that, after rescaling, the ``runway'' for the OR problem need not remain flat: the rescaled polynomial is allowed to take any value from around $1$ up to $2^m$ on these points.}
Let $f(r)$ now denote the averaged acceptance probability of the corresponding witness-guessing algorithm --- i.e., the witness-free algorithm that accepts YES instances with probability $\ge 2^{-m}(1-2^{-2m})$ and accepts NO instances with probability $\le 2^{-2m}$. Consider the
rescaled polynomial $g \coloneqq 2^m f$.
Then $g(0) \ge 2/3$ (and in fact possibly as large as $2^m$), while $g(r) \in [0, 2^{-m}]$ for $r = 1, 2,
\dots, N$. 
Thus the long NO runway remains exponentially close to zero after
rescaling, while the isolated YES point remains bounded away from zero.
This again forces a large polynomial degree.

\begin{figure}[ht]
\begin{mdframed}
\caption{\textbf{The polynomial degree lower bound we use.}
Consider a polynomial $p:\R\to\R$ satisfying
$p(0)=0$ and $1-2^{-(m+1)}\leq p(j)\leq1$
for every $j\in\{1,\ldots,N\}$.
Standard approximate-degree bounds for $\mathrm{OR}$ imply that such a
polynomial has degree
$\Omega(\min\{N,\sqrt{Nm}\})$.}
\label{fig:poly-graph}

\begin{center}
\begin{tikzpicture}[
scale=0.9,
transform shape,
x=1.05cm,
y=3cm,
>=stealth,
every node/.style={font=\small}
]
    \def\runwaylow{0.90}
    \def\runwayhigh{1.00}

    \fill[blue!10]
        (0.75,\runwaylow) rectangle (6.25,\runwayhigh);
    \draw[blue!55, dashed]
        (0.75,\runwaylow) -- (6.25,\runwaylow);
    \draw[blue!55, dashed]
        (0.75,\runwayhigh) -- (6.25,\runwayhigh);

    \draw[->] (-0.65,0) -- (7.10,0)
        node[right] {\(x\)};
    \draw[->] (0,-0.10) -- (0,1.25)
        node[above] {\(p(x)\)};

    \foreach \x/\lab in {
         0/{0},
         1/{1},
         2/{2},
         3/{3},
         6/{N}
    } {
        \draw (\x,-0.025) -- (\x,0.025);
        \node[below=3pt] at (\x,0) {\(\lab\)};
    }
    \node[below=3pt] at (4.45,0) {\(\cdots\)};

    \draw[red!75!black, very thick]
        plot[smooth, tension=0.65] coordinates {
            (-0.25,0.08)
            ( 0.00,0.00)
            ( 0.25,0.32)
            ( 0.60,0.73)
            ( 1.00,0.95)
            ( 1.50,1.02)
            ( 2.00,0.96)
            ( 2.50,0.93)
            ( 3.00,0.96)
            ( 3.55,0.99)
            ( 4.20,0.95)
            ( 4.85,0.98)
            ( 5.45,0.96)
            ( 6.00,0.95)
            ( 6.25,0.97)
        };

    \fill[red!75!black] (0,0) circle (1.7pt);

    \foreach \x/\y in {
        1/0.95,
        2/0.96,
        3/0.96,
        6/0.95
    } {
        \fill[red!75!black] (\x,\y) circle (1.7pt);
    }

    \node[left=3pt] at (-0.08,\runwayhigh) {\(1\)};
    \node[left=3pt] at (-0.08,\runwaylow)
        {\(1-2^{-(m+1)}\)};

    \draw[<->, blue!65!black]
        (6.40,\runwaylow) -- (6.40,\runwayhigh)
        node[midway, right=2pt] {\(2^{-(m+1)}\)};

\end{tikzpicture}
\end{center}
\end{mdframed}
\end{figure}

The idea of applying the polynomial method template to \emph{unitary} query lower bounds, and to the $\QMA$ vs $\QMA(2)$ problem in particular,
was introduced by She and Yuen~\cite{SY22}.
For ease of exposition, we
explain their method in the case where the unitary oracle is Hermitian, which is the only case we will need. 
She and Yuen started by observing that for any quantum algorithm making $T$ queries to a
\emph{unitary} oracle $U$, its acceptance probability is given by a
polynomial $p(U)$ of total degree $2T$ in the entries of $U$. 
They then identified a symmetry that is natural in the $\QMA(2)$ setting:
suppose the oracle $U$ acts on a \emph{bipartite} Hilbert space $\CC^d
\otimes \CC^d$. 
Then it is natural to ask the $\QMA(2)$ algorithm to
decide properties of $U$ that are \emph{invariant under conjugation by local unitaries}: 
\[
U\text{ is a YES instance} \quad \iff \quad (V \otimes W) U
(V^\dagger \otimes W^\dagger) \text{ is a YES instance}.
\]
For instance, suppose $U$ is a reflection that marks a subspace, and consider the problem of distinguishing whether the marked subspace contains a product state or has small overlap with every product state. This property has the desired local-unitary symmetry and is also easily decidable by a $\QMA(2)$ proof system: simply ask for a product state in the marked subspace as the witness. This is the \emph{Entangled Subspace problem} studied by She and Yuen.

She and Yuen identified an analog of \cref{eq:symmetrization-grover} for local unitary invariance. 
For any oracle $U$, let $g(U)$ be the distribution over oracles obtained by conjugating $U$ by independent Haar-random local unitaries. 
Then the acceptance probability of a quantum algorithm, averaged over oracles drawn from $g(U)$, can be written as a linear combination of certain low-degree polynomials in $U$:
\begin{equation}
    \E_{\tilde{U} \sim g(U)} p(\tilde{U}) = \sum_{t,\sigma,\tau} \alpha_{t,\sigma,\tau} \tr[(R_\sigma \otimes R_\tau) U^{\otimes t}]\,. \label{eq:symmetrization-unitary}
\end{equation}
Here, the argument of each trace is an operator on $t$ copies of $\CC^d \otimes \CC^d$, viewed as a tensor product of $t$ copies of the ``left'' system and $t$ copies of the ``right'' system. The operators $R_\sigma$ and $R_\tau$ are permutations of the $t$ left and right registers, respectively.

While \cref{eq:symmetrization-unitary} is a significant improvement over working with all the entries of $U$, it is still much less clean than the analogous result for the $\mathrm{OR}$ problem in \cref{eq:symmetrization-grover}, which
yields a univariate polynomial. 
The main technical contribution of this work is to
identify a highly symmetric family of marked subspcaes for which all of the trace expressions in \cref{eq:symmetrization-unitary} are captured by low-degree univariate polynomials in a single paramter, the dimension $d$ of the subspace. 
Even more importantly, the same polynomial will simultaneously describe both the YES and NO instances of our promise problem. 
This will make our problem formally analogous to the $\mathrm{OR}$ problem, almost immediately yielding a $\QMA$ query lower bound.

\subsection{\texorpdfstring{Our oracle problem and the $\QMA$ lower bound}{Our oracle problem and the QMA lower bound}}
\label{sec:into_oracle_problem}

Our oracle family is a restriction of the Entangled Subspace problem
of She and Yuen to a particular set of subspaces.
We take the oracle $U$ to be a reflection $I-2\Pi$ about a subspace defined by a projector $\Pi$, with YES instances corresponding to rank-one product subspaces and NO instances corresponding to subspaces with low overlap with every product state.
She and Yuen suggested that a family of hard-to-distinguish instances could be constructed by taking random subspaces as NO instances and planting a product state to obtain YES instances.
Our innovation is to consider a more structured family of instances that lets us make better use of local-unitary invariance.

Let $D=2^{\poly(n)}$ be the local ambient dimension, so that the marked
subspace lies in $\CC^D\otimes\CC^D$.
We will construct $\Theta(\sqrt D)$ different NO instances, together with a family of YES instances, with the following two properties:
\begin{itemize}
\item Every YES instance is a rank-one product projector, whereas every
NO instance has overlap at most $1/3$ with every product state.
\item For every local-unitary invariant polynomial arising from
\cref{eq:symmetrization-unitary}, its values on all of these YES and
NO instances are simultaneously captured by a single low-degree
univariate polynomial.
\end{itemize}
The construction underlying these two properties will be described in
the next subsection.

These two facts, together with \cref{eq:symmetrization-unitary}, mean
that we can apply the polynomial method. Suppose there is a $T$-query
$\QMA$ verifier using an $m$-qubit witness. By the amplify-and-guess
argument above, and then averaging over local changes of basis, we
obtain a local-unitary invariant polynomial of degree $O(Tm)$---where
the factor $m$ comes from in-place amplification---whose value is
noticeably large on YES instances and exponentially smaller on NO
instances.

By the second property above, this invariant polynomial becomes a univariate polynomial.
After a simple change of variables, normalization, and flipping the polynomial, we obtain a polynomial $p$ such that $p(0)=0$, while $1-2^{-(m+1)}\leq p(j)\leq1$ at $\Theta(\sqrt D)$ consecutive positive integer values of $j$.
This is exactly the shape shown in \cref{fig:poly-graph}.
The polynomial degree lower bound therefore gives
\[
Tm
\geq
\Omega\!\left(\min\{\sqrt D,D^{1/4}\sqrt m\}\right) \implies 
T^2m=\Omega(\sqrt D).
\]
Thus, when $D$ is exponential in the input length, any $\QMA$ verifier must use either superpolynomially many queries or a superpolynomial-size witness.
By a standard diagonalization argument, we conclude that there exists a unitary oracle $U$ such that $\QMA^U\neq\QMA(2)^U$.

\subsection{Negative dimensions}
In constructing our families of YES and NO instances, the main
ingredients we use are projectors onto the symmetric and antisymmetric
subspaces of bipartite Hilbert spaces of the form $\mathbb{C}^d \otimes \mathbb{C}^d$, where we will refer to the two tensor factors as the ``left'' and ``right'' systems. Since these two subspaces are the eigenspaces of the SWAP operators with eigenvalue $+1$ and $-1$, respectively, in this section we use the notation $P_d^\pm$
to denote the symmetric and antisymmetric subspace projectors. These are defined as
\[ P_d^{\pm} = \frac{I \pm \SWAP}{2}\,. \]
These fulfill the required properties of YES and NO instances; the symmetric subspace contains many product states, whereas any product state has bounded overlap with the antisymmetric subspace.

By the symmetrization argument sketched in the previous section, the acceptance probability of a $T$-query quantum algorithm, averaged over the local changes of basis, is a linear combination of terms of the form
\[ \tr[ (R_{\sigma} \otimes R_{\tau}) (P_d^{\pm})^{\otimes t}]\,, \]
where $t$ ranges from $0$ to $2T$, and $R_\sigma$ and $R_\tau$ are separate permutations of the ``left'' and ``right'' systems. To carry out the polynomial method analysis from the previous section, we would like following ``dream identity'' to hold: for every $\sigma, \tau, t$, 
\begin{equation} \tr[(R_\sigma \otimes R_\tau) (P_d^{\pm})^{\otimes t}] \overset{?}{=} f_{\sigma, \tau, t}(\pm d)\,, \label{eq:not-doubled-up} \end{equation}
where $f_{\sigma, \tau, t}$ is a polynomial of degree $t$, depending on $\sigma, \tau$, and $t$. Crucially, we want the symmetric and antisymmetric projectors to be captured by the \emph{same} polynomial, evaluated at positive and negative values of the argument, respectively. Or, more pithily, we want the antisymmetric projector  to behave like a symmetric projector in a ``negative dimensional space.'' It turns out the dream identity is not always true, but it will hold for a class of $\sigma, \tau$ that will turn out to be sufficient.

To get a very basic intuition for why the dream identity might be expected to hold, we can look at the simplest such identity, the trace: $\tr[P_d^{\pm}]$. To compute the trace, we expand the projector in terms of the identity and SWAP operators:
\[ \tr[P_d^{\pm}] = \frac{1}{2} \left( \tr[I] \pm \tr[\SWAP]\right) = \frac{1}{2} (d^2 \pm d) = \frac{1}{2} ((\pm d)^2 + (\pm d))\,. \]
We see that this expression has exactly the desired form: it is a polynomial that, when evaluated at $+d$, yields the value for $P_{d}^{+}$, and when evaluated at $-d$, yields the value for $P_{d}^{-}$.

Essentially the same strategy will serve us in the general case: simply expand the tensor product of projectors in terms of identity and swap operators, to get
\[ \tr[ (R_{\sigma} \otimes R_{\tau}) (P_d^{\pm})^{ \otimes t}] = \frac{1}{2^t} \sum_{S \subseteq [t]} (\pm 1)^{|S|} \mathrm{Tr}\Big[\underbrace{(R_\sigma \otimes R_\tau) \SWAP^S}_{\substack{= \text{some permutation $R_\xi$}\\\text{of the subsystems}}}\Big]\,, \]
where we have introduced the shorthand notation $\SWAP^S$ to refer to the operator that acts as a tensor product of SWAPs on the pairs of left and right systems indexed by elements of the set $S$. Each of the individual traces in this sum is a trace of some permutation $R_{\xi}$ acting on the $2t$ registers of the quantum system. The value of such a trace is determined entirely by the \emph{cycle decomposition} of $\xi$ and the dimension $d$: in fact,
\[ \tr[R_\xi] = d^{\cyc(\xi)}\,,\]
where $\cyc(\xi)$ is the number of cycles in the cycle decomposition
of $\xi$. A very intuitive way to see this is with Penrose's
diagrammatic notation for tensors, as illustrated in
Figure~\ref{fig:trace-tensor}. 
In the diagram, every cycle of $\pi$ is converted into a closed loop upon taking the trace, which yields a factor of the dimension. Plugging this back into the above, we get
\[ \tr[(R_\sigma \otimes R_\tau) (P^{\pm}_{d})^{\otimes t }] = \frac{1}{2^t} \sum_{S \subseteq [t]} (\pm 1)^{|S|} d^{\cyc(\xi)}\,, \]
where we have notationally suppressed the dependence of $\xi$ on $S, \sigma, \tau$ to avoid clutter.

\begin{figure}[h]
\centering
\[
\tr\!\left[R_{(12)(3)}\right]
=
\vcenter{\hbox{%
\begin{tikzpicture}[
  x=1.1cm,
  y=1cm,
  wire/.style={draw, line width=0.9pt, line cap=round},
  leglabel/.style={font=\scriptsize}
]
  \coordinate (t1) at (0,1);
  \coordinate (t2) at (0.8,1);
  \coordinate (t3) at (2.5,1);
  \coordinate (b1) at (0,-1);
  \coordinate (b2) at (0.8,-1);
  \coordinate (b3) at (2.5,-1);

  \draw[wire] (t1) .. controls (-0.72,1) and (-0.72,-1) .. (b1);
  \draw[wire] (t2) .. controls ( 1.52,1) and ( 1.52,-1) .. (b2);
  \draw[wire] (t3) .. controls ( 3.22,1) and ( 3.22,-1) .. (b3);

  \draw[wire] (b1) -- (t2);
  \draw[wire] (b2) -- (t1);

  \draw[
    wire,
    preaction={draw=white, line width=3.5pt, line cap=round}
  ]
    (0.48,-0.20) -- (0.32,0.20);

  \draw[wire] (b3) -- (t3);

  \node[leglabel, above=2pt] at (t1) {$1$};
  \node[leglabel, above=2pt] at (t2) {$2$};
  \node[leglabel, above=2pt] at (t3) {$3$};
  \node[leglabel, below=2pt] at (b1) {$1$};
  \node[leglabel, below=2pt] at (b2) {$2$};
  \node[leglabel, below=2pt] at (b3) {$3$};
\end{tikzpicture}%
}}
=
d^{2}\,.
\]
\caption{A tensor network diagram illustrating
  $\tr[R_{(12)(3)}]=d^2$.}
\label{fig:trace-tensor}
\end{figure}

Now, at this point we would have shown the ``dream identity'' if the sign factor $(\pm 1)^{|S|}$ were equal to $(\pm 1)^{\cyc(\xi)}$. This is unfortunately not true in general, but \emph{is} true when $\sigma, \tau$ are both \emph{even} permutations. Under this assumption, we have
\[ \tr[(R_\sigma \otimes R_\tau) (P_d^\pm)^{\otimes t}] = \frac{1}{2^t} \sum_{S \subseteq [t]} (\pm 1)^{|S|} d^{\cyc(\xi)} = \frac{1}{2^t} \sum_{S \subseteq [t]} (\pm 1)^{\cyc(\xi)} d^{\cyc(\xi)} = \frac{1}{2^t} \sum_{S \subseteq [t]}  (\pm d)^{\cyc(\xi)}\,, \]
as desired.

It turns out one final trick lets us entirely remove the restriction on $\sigma$ and $\tau$ and derive the dream identity for a different set of projectors: define $\Pi_{d}^\pm:= (P_d^{\pm})^{\otimes 2}$ to be projectors on two copies of $\mathbb{C}^{d} \otimes \mathbb{C}^d$, viewed as $\mathbb{C}^{d^2} \otimes \mathbb{C}^{d^2}$ by ``merging'' the two left-hand systems and the two right-hand systems. Now, any permutation of $t$ merged $d^2$-dimensional registers automatically becomes an \emph{even} permutation of the $2t$ underlying $d$-dimensional registers and
\[ \tr[(R_\sigma \otimes R_{\tau}) (\Pi^\pm_d)^{\otimes t}]= \tr[\underbrace{(R_{\sigma'} \otimes R_{\tau'})}_{\text{even permutations}} (P^\pm_d)^{\otimes 2t}] = f_{\sigma', \tau', 2t}(\pm d)\,. \]

This doubled construction already captures the key intuition we need: by grouping registers into pairs, the relevant permutations become even, and the symmetric and antisymmetric projectors can be described by the same polynomial after formally negating the dimension. 
It would therefore already suffice to derive the qualitative separation $\QMA^U\subsetneq \QMA(2)^U$. However, in the main body of the paper, we use a slightly different construction that gives stronger quantitative bounds and makes the final promise problem cleaner.
Specifically, in \cref{sec:antisymmetric-projectors} we will instead take $\Pi^-_d = \Pi^{d,4}_\mathrm{anti}$, the projector onto the 4-copy antisymmetric subspace, viewed as a bipartite operator by grouping copies $1,2$ and $3,4$. This solves the sign issue in essentially the same way as the doubled projectors, since every permutation of the merged registers becomes an even permutation on the underlying systems. Moreover, this choice is ``stronger'' in the sense that $\Pi^{d,4}_\mathrm{anti} \preceq P_d^{-} \otimes P_d^{-}$, yielding an improved soundness guarantee and, ultimately, better parameters for our disentangler result.

For either construction, each projector $\Pi_d^\pm$ acts on $\C^{d^2} \otimes \C^{d^2}$, so we can embed it into a common ambient space $\C^D \otimes \C^D$ whenever $d^2 \le D$.
For the construction used in the main body, the formal point $d=-1$ corresponds to the symmetric projector $\Pi_1^+$, which is a rank-one product projector and hence gives our YES instances. For the NO instances, we take the antisymmetric projector at  $d=4,9,14,\ldots$ up to $\sqrt D$, giving $\Theta(\sqrt D)$ different NO instances.
Under the simple change of variables $d=5j-1$, the formal
YES point $d=-1$ maps to $j=0$, while the NO instances map to the
consecutive positive integers $j=1,2,\ldots$. These are exactly the
points forming the long ``runway'' in \cref{fig:poly-graph}.

\subsection{Disentanglers}

Our work has immediate consequences for the existence of \emph{approximate disentanglers}. 
\begin{definition}[Approximate disentangler]
\label{def:disentangler}
Let $\Lambda: \mathcal{B}(\C^{2^m}) \rightarrow \mathcal{B}(\C^D\otimes\C^D)$ be a quantum channel.
We say that $\Lambda$ is an $(\eps,\delta)$-disentangler if:
\begin{enumerate}
    \item for every input state $\rho$, there is a separable state $\sigma$
    such that $\frac12\norm{\Lambda(\rho)-\sigma}_1\leq\eps$ and,
    \item for every separable state $\sigma$, there is an input state $\rho$
    such that $ \frac12\norm{\Lambda(\rho)-\sigma}_1\leq\delta$.
\end{enumerate}
\end{definition}

Indeed, suppose we had an $(\epsilon, \delta)$-disentangler $\Lambda$, with $m = m(D)$. Then, in the query model\footnote{Since we are working in the query model, we ignore the computational efficiency of the disentangler: all that matters is that it takes no queries to implement it.}, any $\QMA(2)$ protocol whose completeness-soundness gap is larger than $\eps + \delta$ can be simulated by a $\QMA$ protocol. To see this, let us take a $\QMA(2)$ protocol with completeness $c$ and soundness $s$, where the verifier expects two unentangled proofs of $w$ qubits each. A $\QMA$ verifier for the same problem is to apply the disentangler $\Lambda$ on a witness of $m(2^w)$ qubits, then simulate the $\QMA(2)$ verifier on the output of the disentangler. This yields a $\QMA$ verifier with completeness at least $c - \eps$ (the prover sends a state whose image, under the disentangler, is $\eps$-close to the optimal $\QMA$ witness), and soundness at most $s + \delta$ (any state the prover sends gets mapped to a state that is $\delta$-close to separable). Moreover, this $\QMA$ verifier makes the same number of queries as the original $\QMA(2)$ verifier.

By taking the contrapositive of this simulation argument, our separation between $\QMA$ and $\QMA(2)$ will rule out the existence of disentanglers where $m(D)$ is small. To get quantitative results, we start with a more precise version of our separation: we show that for our version of the Entangled Subspace problem, for every $D'$ that is a perfect square, the following statements hold.
\begin{enumerate}
    \item A single-query $\QMA(2)$ algorithm solves the problem on oracle instances acting on the space $\CC^{D'} \otimes \CC^{D'}$, with a witness consisting a product state in this space, completeness $c = 1$, and constant soundness $s < 1$.
    \item Any $\QMA$ algorithm with $m$-qubit witnesses on these instances requires $\Omega(\sqrt{(D')^{1/2}/m})$ queries.
\end{enumerate}
Putting these together, suppose we have an $(\eps, \delta)$-disentangler with $\eps + \delta < c- s$. Then the simulation yields a $\QMA$ algorithm making $1$ query and taking a witness of size $m(D')$. The $\QMA$ lower bound therefore implies that

\[
    m(D') \geq \Omega((D')^{1/2})\,.
\]
In the regime $\epsilon + \delta \in [0, c-s)$, this lower bound matches the upper bound obtained by \cite{JWX26}, up to a $\log(D)$ factor. With a little bit of additional work, we can extend the result to hold for $\epsilon + \delta > c-s$, with a weaker quantitative lower bound.

\section{Preliminaries}

We write $[n]\coloneqq\{1,\ldots,n\}$. For a permutation $\pi$, we write $\cyc(\pi)$ to denote the number of cycles in its cycle decomposition. Let $U(d)$ denote the unitary group on $\C^d$.
Whenever we average over $U(d)$, it is with respect to the Haar measure.
We use standard definitions of $\QMA$ and $\QMA(2)$; see, e.g.,
Watrous~\cite{Watrous09}.
We also assume familiarity with the basics of quantum computing and quantum information; see, e.g., \cite{watrous2018theory}.

Next, we define the product value of a bipartite operator, which measures
its maximum overlap with a product state.

\begin{definition}[Product value]
\label{def:prod-val}
For a positive semidefinite operator $P$ on a bipartite Hilbert space
$A \otimes B$, define its \emph{product value} by
\[
h_{\rm Sep}(P)
\coloneqq
\max_{\substack{\ket{\alpha} \in A,\ \ket{\beta} \in B\\
\norm{\alpha}=\norm{\beta}=1}}
\bra{\alpha}\bra{\beta} P \ket{\alpha}\ket{\beta}\,.
\]
\end{definition}

We also use the following well-known lower bound on the degree of polynomial approximations to $\mathrm{OR}$.
\begin{theorem}[Approximate Degree Bound for OR, \cite{BCWZ99}]\label{thm:approximate_degree_bound}
    Let $q: \RR \to \RR$ be a real polynomial of degree $d$. If $q(0) = 0$ and $1-\eps \leq q(i) \leq 1$ for all $1 \leq i \leq N$, then $d \geq \Omega(\min\{N, \sqrt{N \log(1/\eps)}\})$. 
\end{theorem}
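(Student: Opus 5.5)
The plan is the standard symmetrization (Minsky–Papert) reduction to the univariate $\mathrm{OR}$ lower bound of Nisan–Szegedy and Paturi, sharpened in the small-error regime. Write $\delta \coloneqq \eps$ for the error in the statement, to avoid a clash with the degree $d$; so $q(0)=0$ and $1-\delta\le q(i)\le 1$ for $1\le i\le N$. It suffices to treat $\delta$ small, say $\delta\le 1/3$, since otherwise the $\sqrt{N\log(1/\delta)}$ term is $O(\sqrt N)$ and the classical $\Omega(\sqrt N)$ bound applies directly.

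\textbf{Step 1 (upper bound on the derivative).} Since $q$ is bounded on the $N$ integer points, a Coppersmith--Rivlin / Ehlich--Zeller type inequality shows $|q|\le C$ on all of $[0,N]$ as long as $d\le c\sqrt N$. Markov's inequality then gives $|q'|\le 2d^2C/N$ on $[0,N]$.

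\textbf{Step 2 (lower bound on the derivative).} Since $q$ jumps from $0$ to at least $1-\delta$ between $0$ and $1$, the mean value theorem gives $q'\ge 1-\delta$ somewhere in $[0,1]$. Combining with Step 1 yields $d=\Omega(\sqrt N)$, which is the classical $\Omega(\sqrt N)$ bound.

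\textbf{Step 3 (the $\log(1/\delta)$ improvement).} Consider $r(x)\coloneqq 1-q(x)$, which satisfies $|r(i)|\le \delta$ for $1\le i\le N$ and $r(0)=1$. I would then follow one of two routes.
\begin{itemize}
\item \emph{Extremal route.} Show that $|r|\lesssim\delta$ on the whole interval $[1,N]$ once $d\le c\sqrt N$, by a discrete-to-continuous comparison. Then the Chebyshev extremal property gives
\[
|r(0)| \le \delta\, T_d\!\left(1+\tfrac{2}{N-1}\right) \le \delta\, e^{O(d/\sqrt N)}.
\]
Since $r(0)=1$, this forces $d=\Omega(\sqrt{N\log(1/\delta)})$ whenever that quantity is at most $N$.
\item \emph{Constructive-dual route.} Exhibit a dual witness as in Buhrman–Cleve–de Wolf–Zalka: a signed measure on $\{0,\dots,N\}$ orthogonal to all polynomials of degree less than $d$, with enough weight at $0$.
\end{itemize}
The $\min$ with $N$ arises because for $d\ge N$ interpolation through $N+1$ points makes any values achievable.

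\textbf{Main obstacle.} The delicate step is the discrete-to-continuous comparison with $\delta$-dependent constants. The bound $|r(i)|\le\delta$ holds only at integers, while $r$ may be large between them. I would handle this with the Coppersmith--Rivlin bound applied to $r/\delta$ restricted to $[1,N]$, which gives $|r|\le \delta e^{O(d^2/N)}$. That bound is harmless when $d^2/N\ll \log(1/\delta)$ and otherwise already yields the claim.
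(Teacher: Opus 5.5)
The paper gives no proof of this statement and simply cites \cite{BCWZ99}. Your extremal route is precisely the BCWZ argument and is correct: apply Coppersmith--Rivlin to $r/\delta$ on $\{1,\dots,N\}$ to get $|r|\le a\,\delta\, e^{b d^2/N}$ on $[1,N]$, which is legitimate because for $d\geq N$ the bound holds trivially, and then evaluate at $x=0$ using the Chebyshev growth bound $T_d(1+\tfrac{2}{N-1})\le e^{O(d/\sqrt N)}$. Note that your first bullet as literally phrased, with $|r|\lesssim\delta$ only under $d\le c\sqrt N$, yields nothing beyond $\Omega(\sqrt N)$; it is the $e^{O(d^2/N)}$ bound from your final paragraph, not the Ehlich--Zeller one, that carries the $\sqrt{N\log(1/\delta)}$ bound.
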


\subsection{The antisymmetric and symmetric subspaces}

Let $S_k$ denote the symmetric group on $k$ elements.  
For $\pi\in S_k$, let $R_\pi$ be the unitary operator on $(\mathbb C^d)^{\otimes k}$ that permutes the $k$ subsystems according to $\pi$:
\[
R_\pi
\bigl(v_1\otimes\cdots\otimes v_k\bigr) = v_{\pi^{-1}(1)}\otimes\cdots\otimes v_{\pi^{-1}(k)}\,.
\]

\begin{definition}[Antisymmetric subspace]
The \emph{antisymmetric subspace} of $(\mathbb C^d)^{\otimes k}$ is
\[
\wedge^k \mathbb C^d
\coloneqq
\left\{
\ket\psi \in(\C^d)^{\otimes k} :
R_\pi|\psi\rangle
=
\sgn(\pi)|\psi\rangle
\text{ for every }\pi\in S_k
\right\}\,.
\]
The orthogonal projector onto the antisymmetric subspace is defined as
\begin{equation}
\Pi_{\mathrm{anti}}^{d,k} \coloneqq \frac{1}{k!} \sum_{\pi\in S_k} \sgn(\pi)R_\pi\,.
\label{eq:antisymmetric-projector}
\end{equation}
We refer to $\Pi_{\mathrm{anti}}^{d,k}$ as the \emph{antisymmetric projector}.
\end{definition}

\begin{definition}[Symmetric subspace]
The \emph{symmetric subspace} of $(\C^d)^{\otimes k}$ is
\[
\vee^k \C^d \coloneqq \left\{ \ket\psi\in(\C^d)^{\otimes k} : R_\pi\ket\psi
= \ket\psi \text{ for every }\pi\in S_k \right\}\,.
\]
The orthogonal projector onto the symmetric subspace is defined as
\begin{equation}
\Pi_{\mathrm{sym}}^{d,k} \coloneqq \frac{1}{k!} \sum_{\pi\in S_k} R_\pi\,.
\label{eq:symmetric-projector}
\end{equation}
We refer to $\Pi_{\mathrm{sym}}^{d,k}$ as the \emph{symmetric projector}.
\end{definition}

\subsection{Local-unitary invariant polynomials}

Let $A=\C^{d_A}$ and $B=\C^{d_B}$. A polynomial $f$ in the entries of an
operator $X$ on $A\otimes B$ is \emph{local-unitary invariant} if
\[
f\left((U\otimes V)X(U\otimes V)^\dagger\right)=f(X)
\]
for every $U\in U(d_A)$ and $V\in U(d_B)$.

We will use the following standard characterization of these invariant
polynomials~\cite{Bra37,Pro76,SY22}.

\begin{theorem}
\label{thm:lu-invariant-generators}
For $t\geq1$ and $\pi,\sigma\in S_t$, let $R_\pi^A$ and $R_\sigma^B$
denote the operators that permute the $t$ copies of $A$ and $B$,
respectively. Then every homogeneous degree-$t$ local-unitary invariant
polynomial is a linear combination of the polynomials
\[
X \longmapsto \tr\left[ X^{\otimes t}
\left(R_\pi^A\otimes R_\sigma^B\right) \right], \qquad \pi,\sigma\in S_t\,.
\]
\end{theorem}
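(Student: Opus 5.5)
The plan is the standard first-fundamental-theorem argument for the unitary group, applied to $U(d_A)\times U(d_B)$. Let $f$ be a homogeneous degree-$t$ polynomial in the entries of $X$ that is local-unitary invariant. Since $f$ is homogeneous of degree $t$, it is a linear functional of the symmetric tensor $X^{\otimes t}$. So there is an operator $M$ on $(A\otimes B)^{\otimes t}$ with $f(X)=\tr[X^{\otimes t}M]$. We may replace $M$ by its symmetrization over simultaneous permutations of the $t$ copies of $A\otimes B$, but this is not needed.

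The key step is averaging. Since $f(X)=f((U\otimes V)X(U\otimes V)^\dagger)$ for all $U,V$, we can average over the Haar measure on $U(d_A)\times U(d_B)$. This gives $f(X)=\tr[X^{\otimes t}\tilde M]$, where
\[
\tilde M=\E_{U,V}\,(U\otimes V)^{\dagger\otimes t}M(U\otimes V)^{\otimes t}.
\]
Regroup $(A\otimes B)^{\otimes t}\cong A^{\otimes t}\otimes B^{\otimes t}$, so that $(U\otimes V)^{\otimes t}=U^{\otimes t}\otimes V^{\otimes t}$. Then $\tilde M$ commutes with $U^{\otimes t}\otimes V^{\otimes t}$ for every $U,V$, i.e. it lies in the commutant of the algebra generated by $\{U^{\otimes t}\otimes V^{\otimes t}\}$.

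The main (but classical) step is to identify this commutant. Schur–Weyl duality says the commutant of $\{U^{\otimes t}\}$ on $A^{\otimes t}$ is $\mathrm{span}\{R^A_\pi:\pi\in S_t\}$, and similarly for $B$. The algebra generated by $\{U^{\otimes t}\otimes V^{\otimes t}\}$ is the tensor product of the algebras generated by $\{U^{\otimes t}\}$ and $\{V^{\otimes t}\}$, since $U$ and $V$ vary independently. The commutant of a tensor product of von Neumann algebras is the tensor product of the commutants. Hence $\tilde M\in\mathrm{span}\{R^A_\pi\otimes R^B_\sigma\}$.

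Substituting this expansion into $\tr[X^{\otimes t}\tilde M]$ writes $f$ as a linear combination of the claimed generators. This is exactly the form in the theorem, with permutations acting on the $t$ copies of $A$ and of $B$ separately. Because the result is classical (\cite{Bra37,Pro76}), I would cite Schur–Weyl duality and the commutant theorem rather than reprove them.
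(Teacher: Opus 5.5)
Your proof is correct: the Haar twirl of the coefficient operator $M$, followed by Schur--Weyl duality on $A^{\otimes t}$ and $B^{\otimes t}$, is the standard argument behind this result, which the paper does not prove but simply cites from \cite{Bra37,Pro76,SY22}. Two points deserve to be made explicit. First, writing $f(X)=\tr[X^{\otimes t}M]$ uses that $f$ involves the entries of $X$ but not their conjugates. Second, in finite dimensions the commutant step $(\mathcal A\otimes\mathcal B)'=\mathcal A'\otimes\mathcal B'$ is elementary: expanding $\tilde M$ in a basis of operators on $B^{\otimes t}$ gives $\tilde M\in\mathrm{span}\{R^A_\pi\}\otimes L(B^{\otimes t})$, the symmetric argument gives the other containment, and the intersection of these two subspaces is $\mathrm{span}\{R^A_\pi\otimes R^B_\sigma\}$, so no von Neumann algebra machinery is needed.
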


\section{On the antisymmetric and symmetric projectors}
\label{sec:antisymmetric-projectors}

We study the antisymmetric and symmetric projectors acting on four systems under the bipartition $(12\mid34)$.
For convenience, we write $\Pi_d^-\coloneqq \Pi_{\rm anti}^{d,4}$ and $\Pi_d^+\coloneqq \Pi_{\rm sym}^{d,4}$. 

We prove two properties that drive the oracle separation. 
First, we show that the antisymmetric projector is a good candidate for NO instances of the entangled subspace problem, i.e., it has small overlap with every product state:
\[
h_{\rm Sep}(\Pi_d^-)\leq \frac 1 3\,.
\]
Second, for every fixed local-unitary invariant polynomial $G$, we show that there is a single univariate polynomial $g$ such that
\[
g(d)=G(\Pi_d^-)\,, \qquad g(-d)=G(\Pi_d^+)\,.
\]
Intuitively, this means we will be able to apply the polynomial method to quantum algorithms that switch between querying the symmetric and antisymmetric subspaces simultaneously.  Our first property shows that switching the sign of the subspace corresponds to changing between YES and NO instances of the entangled subspace problem, which will give us our degree lower bound.

\subsection{Small overlap with product states}
\label{subsec:prod-value}

We compute the product value of the antisymmetric projector across the bipartition $(12\mid 34)$.

\begin{lemma}
\label{lem:product-value}
For every $d \geq 4$, $h_{\rm Sep}(\Pi_d^-) \leq \frac{1}{3}$.
\end{lemma}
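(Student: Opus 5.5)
The plan is to reduce to antisymmetric two-party vectors and then compute the overlap explicitly through the coset structure of $S_4$. Since $\Pi_d^-$ is a projector, $\bra{\alpha}\bra{\beta}\Pi_d^-\ket{\alpha}\ket{\beta}=\norm{\Pi_d^-\ket\alpha\ket\beta}^2$. Because $\Pi_d^-=\Pi_d^-(P^-_{12}\otimes P^-_{34})$, where $P^-_{12},P^-_{34}$ are the two-copy antisymmetric projectors on the pairs $12$ and $34$, we may replace $\ket\alpha,\ket\beta$ by their antisymmetric components. These have norm at most $1$, so by homogeneity it suffices to bound
\[
\max\bigl\{\bra{\alpha}\bra{\beta}\Pi_d^-\ket{\alpha}\ket{\beta}\ :\ \ket\alpha,\ket\beta\in\wedge^2\C^d \text{ unit}\bigr\}.
\]
We identify $\ket\alpha=\sum_{ij}A_{ij}\ket{i}\ket{j}$ and $\ket\beta=\sum_{kl}B_{kl}\ket k\ket l$ with antisymmetric matrices $A,B$ satisfying $\norm{A}_F=\norm{B}_F=1$.

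Next, expand $\Pi_d^-=\frac1{24}\sum_\pi\sgn(\pi)R_\pi$ using \cref{eq:antisymmetric-projector}. On $\wedge^2\otimes\wedge^2$, the subgroup $S_2\times S_2$ acts by $\sgn$, so the sum collapses to one term per coset of $S_2\times S_2$ in $S_4$. The six coset representatives are $\mathrm{id}$, $(13)(24)$, and the four transpositions $(13),(14),(23),(24)$. The identity contributes $1$. The representative $(13)(24)$ is even and maps $\ket\alpha\ket\beta\mapsto\ket\beta\ket\alpha$, contributing $|\langle\alpha|\beta\rangle|^2=|\tr(A^\dagger B)|^2\le1$. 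By antisymmetry, the four transpositions all contribute the same quantity, namely $-\bra\alpha\bra\beta R_{(13)}\ket\alpha\ket\beta$. Writing this term out in indices gives a trace of the form $\tr(A^\dagger B\,\overline{A}^{\,\dagger}\overline{B})$ up to transposes; call it $c$. So the value is
\[
\tfrac16\bigl(1+|\tr(A^\dagger B)|^2-4c\bigr).
\]
The target $1/3$ is therefore equivalent to $|\tr(A^\dagger B)|^2-4c\le 1$.

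The main obstacle is controlling $c$ jointly with $|\tr(A^\dagger B)|^2$. The naive estimate $|c|\le\norm{A}_{\rm op}^2\norm{B}_F^2\le\tfrac12$ fails: it uses only that singular values of a normalized antisymmetric matrix come in equal pairs, so $\norm{A}_{\rm op}\le1/\sqrt2$, and it yields only $2/3$. So the sign and the correlation between the two terms must be exploited.

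My first attempt is to put $A$ in its canonical (Youla) form $A=\sum_i a_i(e_{2i-1}e_{2i}^T-e_{2i}e_{2i-1}^T)$ with $\sum_i 2a_i^2=1$ by a local unitary $U\otimes U$ on copies $1,2$. Applying the same $U^{\otimes4}$ to all four copies leaves $\Pi_d^-$ invariant. I would then write both terms in the $2\times2$ block decomposition of $B$ and show that $-4c$ and $|\tr(A^\dagger B)|^2$ trade off against each other. Concretely, the diagonal-block part of $B$, the only part seen by $\tr(A^\dagger B)$, enters $c$ with the favorable sign, and I would conclude with Cauchy--Schwarz. As a sanity check, equality should be attained at $A=B=(e_1\wedge e_2+e_3\wedge e_4)/\sqrt2$, which is where $d\ge4$ is used.

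If this direct approach gets messy, my fallback is the exterior-algebra identity that $\bra{\alpha}\bra{\beta}\Pi_d^-\ket{\alpha}\ket{\beta}$ equals a normalized $\norm{\alpha\wedge\beta}^2$. I would then bound $\norm{\alpha\wedge\beta}$ using the Schmidt decomposition of $\alpha$ as a $2$-form.
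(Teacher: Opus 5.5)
Your setup is the same as the paper's: you reduce to unit $\ket{\alpha},\ket{\beta}\in\wedge^2\C^d$, collapse the antisymmetrizer onto six coset representatives, and obtain the value $\frac16\bigl(1+|\langle\alpha|\beta\rangle|^2-4c\bigr)$ with $c=\bra{\alpha}\bra{\beta}R_{(13)}\ket{\alpha}\ket{\beta}$. The gap is that you never prove the remaining inequality $|\langle\alpha|\beta\rangle|^2-4c\le 1$. The Youla-form trade-off and the exterior-algebra fallback are announced but not carried out.

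The missing step is the one-line observation $c\ge 0$. Since $R_{(13)}$ acts only on registers $1$ and $3$, $c=\tr\bigl[R_{(13)}(\rho_\alpha\otimes\rho_\beta)\bigr]=\tr(\rho_\alpha\rho_\beta)$, where $\rho_\alpha,\rho_\beta$ are the one-register marginals. In your matrix notation the contraction is $\sum_{i,j,k,l}A_{ij}\overline{A_{kj}}B_{kl}\overline{B_{il}}=\tr(AA^\dagger BB^\dagger)=\|A^\dagger B\|_F^2\ge 0$. Combined with $|\langle\alpha|\beta\rangle|^2\le 1$, which you already noted, this gives $\frac16(1+1-0)=\frac13$ at once. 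No correlation between the two terms is needed, and this is exactly how the paper concludes.

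The alternating form $\tr(A^\dagger B\,\overline{A}^{\dagger}\overline{B})$ you guessed is not equal to $c$, even up to transposes. For $\alpha\propto\ket{12}-\ket{21}$ and $\beta\propto\ket{13}-\ket{31}$ it vanishes, while $c=\frac14$. That form hides exactly the positivity you need.

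Your reason for expecting a delicate trade-off, the claimed equality case, is also wrong. At $A=B=(e_1\wedge e_2+e_3\wedge e_4)/\sqrt2$ you have $AA^\dagger=\frac14 I$ on $\mathrm{span}(e_1,\dots,e_4)$, so $c=\frac14$ and the value is $\frac16$, not $\frac13$. Indeed, $\Pi_4^-$ is the rank-one projector onto $e_1\wedge e_2\wedge e_3\wedge e_4$, whose six Schmidt coefficients across $12|34$ are equal, so $h_{\rm Sep}(\Pi_4^-)=\frac16$.

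In general, the value $\frac13$ would need $|\langle\alpha|\beta\rangle|=1$ and $\tr(\rho_\alpha\rho_\beta)=0$ at the same time. The first forces $\beta=e^{i\theta}\alpha$, and then $\tr(\rho_\alpha\rho_\beta)=\tr(\rho_\alpha^2)>0$. So $\frac13$ is only a supremum approached as $d\to\infty$: for even $d$, taking $\alpha=\beta$ with maximally mixed marginal gives $\frac13-\frac{2}{3d}$. The hypothesis $d\ge 4$ only ensures $\Pi_d^-\neq 0$. There is no extremal configuration to balance against, so your block-decomposition and Cauchy--Schwarz plan addresses a difficulty that does not exist.
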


\begin{proof}
 
Since $\Pi_d^- = \Pi_d^- \bigl( \Pi_{\rm anti}^{d,2}\otimes\Pi_{\rm anti}^{d,2} \bigr)$,
it suffices to consider pairs of unit vectors in the antisymmetric subspace, $\ket{\alpha},\ket{\beta}\in\wedge^2\C^d$.
Set $\ket{\psi}\coloneqq \ket{\alpha}_{12}\ket{\beta}_{34}$.

By definition, the antisymmetric subspace projector can be written as $\sum_{\tau} \sgn(\tau) R_{\tau}$.  However, consider the subgroup $H=\{e,(12),(34),(12)(34)\}$, i.e., permutations that only swap within the two parts of $\ket{\psi} = \ket{\alpha} \ket{\beta}$.
Since every $h\in H$ applies a permutation to a state in the antisymmetric subspace, by definition we have $R_h\ket{\psi}=\sgn(h)\ket{\psi}$. 
Therefore, for any permutation $r\in S_4$,
\[
\sgn(rh)R_{rh}\ket{\psi} = \sgn(r)R_r\ket{\psi}\,.
\]
This means that for any pair of permutations $\pi, \pi'$ such that $\pi = \pi' \circ \sigma$ for $\sigma \in H$, the terms $\sgn(\pi) R_{\pi}$ and $\sgn(\pi') R_{\pi'}$ act the same on $\ket{\psi}$.
There are $\abs{S_4}/\abs{H} = 6$ cosets of $H$, and we choose representatives $e,(13),(14),(23),(24),(13)(24)$ from each coset to obtain our expression. 
Thus, when restricted to $\wedge^2\C^d\otimes\wedge^2\C^d$, the four-system antisymmetric
projector has the following form:
\begin{equation}
\label{eq:asym_expr}
\Pi_d^- = \frac{1}{6} \left(I-R_{13}-R_{14}-R_{23}-R_{24}+R_{13}R_{24}\right)\,.
\end{equation}
Here, we use $R_{ij}$ denote the swap of registers $i$ and $j$.

We can now evaluate $\bra \psi \Pi^-_d \ket \psi$ given this reduced form. Let $\rho_\alpha$ and $\rho_\beta$ denote the one-register reduced states of $\ket{\alpha}$ and $\ket{\beta}$, respectively. 
By antisymmetry, the two one-register marginals of each state are equal, so the choice of register traced out is irrelevant.

For every $i\in\{1,2\}$ and $j\in\{3,4\}$, the swap identity $\tr(R(A\otimes B)) = \tr(AB)$ gives
\begin{equation*}
\bra{\psi}R_{ij}\ket{\psi}
= \tr\left[R_{ij}(\rho_\alpha \otimes \rho_\beta) \right]
= \tr(\rho_\alpha\rho_\beta)\,.
\end{equation*}
Because $\rho_\alpha$ and $\rho_\beta$ are positive semi-definite, $\tr(\rho_\alpha\rho_\beta)\geq 0$, and all four single-swap terms in \Cref{eq:asym_expr} are
nonnegative. The remaining term is also simple to evaluate. The operator
$R_{13}R_{24}$ exchanges the two pairs of registers, so $R_{13}R_{24} \ket{\alpha}_{12}\ket{\beta}_{34} = \ket{\beta}_{12}\ket{\alpha}_{34}$.
Therefore,
\[
\bra{\psi}R_{13}R_{24}\ket{\psi}
=
\abs{\braket{\alpha}{\beta}}^2
\leq 1\,.
\]
Substituting these identities into \Cref{eq:asym_expr} gives
\begin{align*}
6\bra{\psi}\Pi_d^-\ket{\psi} &=
1 - 4\tr(\rho_\alpha\rho_\beta) + \abs{\braket{\alpha}{\beta}}^2  \leq 2\,.
\end{align*}
Hence, we have that for all product states $\ket{\psi}$,
\[
\bra{\psi}\Pi_d^-\ket{\psi} \leq \frac{1}{3}\,.
\]
Taking the maximum over product states proves the claim.
\end{proof}

\subsection{Local-unitary invariants of the symmetric and antisymmetric projectors}
\label{subsec:lu-antisym}

We next study the behavior of $\Pi_d^-$ and $\Pi_d^+$ under local-unitary invariant polynomials. 
In particular, we show that for every fixed local-unitary invariant
polynomial $G$, there is a univariate polynomial $g$ such that $g(d)=G(\Pi_d^-)$ and $g(-d)=G(\Pi_d^+)$.

To show this, it suffices by \cref{thm:lu-invariant-generators} to consider the spanning family of invariants defined as follows.
For the bipartition $(12\mid34)$, let $A = B =(\C^d)^{\otimes 2}$.
For an operator $P$ on a bipartite Hilbert space $A\otimes B$, and for $t\geq1$ and $\pi,\sigma\in S_t$, define
\[
J_{\pi,\sigma}(P) \coloneqq \tr\left[ P^{\otimes t} \left(R_\pi^A\otimes R_\sigma^B\right) \right]\,,
\]
where $R_\pi^A$ and $R_\sigma^B$ permute the $t$ copies of $A$ and $B$, respectively.

\begin{lemma}
\label{lem:formal-dimension}
Fix $t\geq1$ and $\pi,\sigma\in S_t$.
There is a polynomial $q_{\pi,\sigma}\in\R[x]$ of degree at most $4t$
such that, for every integer $d\geq4$,
\[
q_{\pi,\sigma}(d) = J_{\pi,\sigma}(\Pi_d^-)\,,
\]
and, for every integer $d \geq 1$,
\[
q_{\pi,\sigma}(-d) = J_{\pi,\sigma}(\Pi_d^+)\,.
\]
\end{lemma}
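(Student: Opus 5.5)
We expand $(\Pi_d^\pm)^{\otimes t}$ into permutation operators on the $4t$ underlying $d$-dimensional registers and use that the trace of a permutation operator is $d^{\cyc}$. Write $\Pi_d^\pm = \frac{1}{24}\sum_{\rho\in S_4}\chi_\pm(\rho)R_\rho$, where $\chi_+\equiv 1$ and $\chi_-=\sgn$. Then
\[
(\Pi_d^\pm)^{\otimes t} = \frac{1}{24^t}\sum_{\rho_1,\ldots,\rho_t\in S_4}\Bigl(\prod_{i}\chi_\pm(\rho_i)\Bigr) R_{\rho_1\times\cdots\times\rho_t}\,,
\]
where $\rho_1\times\cdots\times\rho_t\in S_{4t}$ acts blockwise on the $t$ groups of four registers. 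The operator $R^A_\pi\otimes R^B_\sigma$ is itself a permutation operator $R_{\hat\pi\hat\sigma}$ on the $4t$ registers. Here $\hat\pi$ moves the pairs of registers $(4i-3,4i-2)$ as blocks according to $\pi$, and $\hat\sigma$ does the same with the pairs $(4i-1,4i)$ according to $\sigma$. Multiplying and taking the trace gives
\[
J_{\pi,\sigma}(\Pi_d^\pm)=\frac{1}{24^t}\sum_{\rho_1,\ldots,\rho_t}\Bigl(\prod_i\chi_\pm(\rho_i)\Bigr)\, d^{\cyc(\xi)},\qquad \xi=(\rho_1\times\cdots\times\rho_t)\,\hat\pi\hat\sigma\,.
\]

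The key parity step is as follows. A block permutation that moves pairs of registers as units has even sign, since each transposition of two pairs is a product of two register transpositions. So $\sgn(\hat\pi)=\sgn(\hat\sigma)=1$, and therefore $\sgn(\xi)=\prod_i\sgn(\rho_i)$. For any $\xi\in S_n$ we have $\sgn(\xi)=(-1)^{n-\cyc(\xi)}$, and here $n=4t$ is even, so $\sgn(\xi)=(-1)^{\cyc(\xi)}$. Now define
\[
q_{\pi,\sigma}(x)\coloneqq\frac{1}{24^t}\sum_{\rho_1,\ldots,\rho_t}x^{\cyc(\xi)}\,.
\]
This is a real polynomial of degree at most $4t$, since $\cyc(\xi)\le 4t$. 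At $x=-d$ each term becomes $(-1)^{\cyc(\xi)}d^{\cyc(\xi)}=\prod_i\sgn(\rho_i)\,d^{\cyc(\xi)}$, which gives $J_{\pi,\sigma}(\Pi_d^-)$. At $x=d$ each term is $d^{\cyc(\xi)}$, which gives $J_{\pi,\sigma}(\Pi_d^+)$.

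The argument as written therefore yields $q(-d)=J(\Pi^-_d)$ and $q(d)=J(\Pi^+_d)$, the reverse of the labelling in the statement. The fix is simply to take $q_{\pi,\sigma}(-x)$ in place of $q_{\pi,\sigma}(x)$. This is still a real polynomial of the same degree, and it satisfies exactly the stated identities.

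Next, the ranges of $d$ need no special treatment: the identity $\tr R_\xi=d^{\cyc(\xi)}$ holds for every $d\ge1$. So both identities hold for all $d\ge1$, and the condition $d\ge4$ for $\Pi^-_d$ is only there to keep the projector nonzero and relevant to \cref{lem:product-value}.

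The main point requiring care is the sign bookkeeping. One must verify that $\hat\pi$ and $\hat\sigma$, lifted to $S_{4t}$, really are products of paired transpositions, so that both are even. One must also ensure that the ordering convention for the product of permutation operators, $R_\alpha R_\beta=R_{\alpha\beta}$, does not change the cycle count. It does not, because $\cyc$ is a class function and $\cyc(\alpha\beta)=\cyc(\beta\alpha)$.
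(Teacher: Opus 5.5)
Your proposal is correct and is essentially the paper's proof. Both arguments expand into permutation operators on the $4t$ registers and use $\tr R_\xi = d^{\cyc(\xi)}$. Both then use the evenness of the lifted block permutation, together with $4t$ being even, to get $\sgn(\xi)=(-1)^{\cyc(\xi)}$, which turns the sign weights into the substitution $d\mapsto -d$. The only difference is cosmetic: you build the polynomial from the symmetric expansion and then reflect $x\mapsto -x$, whereas the paper defines $q_{\pi,\sigma}$ directly from the antisymmetric expansion.
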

\begin{proof}
	Let $\xi_{\pi, \sigma}\in S_{4t}$ be the permutation of the $4t$ underlying systems induced by $R_\pi^A\otimes R_\sigma^B$.
Since $A$ and $B$ each consist of two systems and $\pi$ and $\sigma$ permute two-system blocks,
$\xi_{\pi, \sigma}$ is even. For $\bold{\tau}=(\tau_1,\ldots,\tau_t)\in S_4^t$, write
$\overline \tau \coloneqq \tau_1\oplus\cdots\oplus\tau_t \in S_{4t}$.

We expand $(\Pi_d^-)^{\otimes t}$ as $\sum_{\tau \in S^t_4} \sgn(\overline\tau) R_{\overline\tau}$ and apply this identity to get 
\[
J_{\pi,\sigma}(\Pi_d^-) 
= \tr\left[(\Pi_d^-)^{\otimes t} R_{\xi_{\pi, \sigma}}\right] 
= \frac{1}{(4!)^t} \sum_{\bold\tau \in S_4^t} \sgn(\overline\tau)\tr(R_{\overline\tau {\xi_{\pi, \sigma}}})\,.
\]
Observe that $\tr(R_\rho)=d^{\cyc(\rho)}$. 
To see this, first note that the trace of $R_\rho$ is the number of standard basis vectors that it fixes. 
A standard basis vector is fixed by $R_\rho$ exactly when its indices are constant on each cycle of $\rho$, giving $d^{\cyc(\rho)}$ such basis vectors. Continuing the above derivation, this yields
\[
J_{\pi,\sigma}(\Pi_d^-) = \frac{1}{(4!)^t} \sum_{\bold\tau \in S_4^t} \sgn(\overline\tau)\tr(R_{\overline\tau {\xi_{\pi, \sigma}}})= \frac{1}{(4!)^t} \sum_{\bold\tau \in S_4^t} \sgn(\overline\tau) d^{\cyc(\overline \tau {\xi_{\pi, \sigma}})} =: q_{\pi,\sigma}(d)\,.
\]
Since $\cyc(\overline\tau{\xi_{\pi, \sigma}})\leq 4t$, we have $\deg(q_{\pi,\sigma})\leq 4t$.

Until now, we defined the polynomial $q_{\pi, \sigma}$ using only the antisymmetric subspace projector, but it has an expression that only depends on $d$, the input.  Surprisingly, we will see that when we plug in a negative number for $d$, we get the same expression as if we had defined $q_{\pi, \sigma}$ using the symmetric subspace projector. Notice that for every $\rho\in S_{4t}$, we have $ (-d)^{\cyc(\rho)} = \sgn(\rho)d^{\cyc(\rho)}\,. $
Indeed, suppose the disjoint-cycle decomposition of $\rho$ has cycle lengths $\ell_1,\ldots,\ell_{\cyc(\rho)}$. Then each cycle of length $\ell_i$ has sign $(-1)^{\ell_i-1}$. Since $\sum_{i=1}^{\cyc(\rho)}\ell_i=4t$,
we obtain
\[
\sgn(\rho) = \prod_{i=1}^{\cyc(\rho)}(-1)^{\ell_i-1} = (-1)^{4t-\cyc(\rho)} = (-1)^{\cyc(\rho)}\,,
\]
where the last equality uses that $4t$ is even.

Plugging this back into the definition of $q_{\pi, \sigma}$, we have
\[
q_{\pi,\sigma}(-d)
= \frac{1}{(4!)^t} \sum_{\boldsymbol{\tau}\in S_4^t} \sgn(\overline\tau) \sgn(\overline\tau{\xi_{\pi, \sigma}}) d^{\cyc(\overline\tau{\xi_{\pi, \sigma}})}
= \frac{1}{(4!)^t} \sum_{\boldsymbol{\tau}\in S_4^t} d^{\cyc(\overline\tau{\xi_{\pi, \sigma}})}\,,
\]
where the second equality uses $\sgn(\overline\tau)\sgn(\overline\tau{\xi_{\pi, \sigma}}) = \sgn({\xi_{\pi, \sigma}}) = 1$.

Remarkably, the last expression is exactly the expansion obtained from
$(\Pi_d^+)^{\otimes t}$, since the symmetric projector contains no
permutation signs. Hence
\[
q_{\pi,\sigma}(-d) = J_{\pi,\sigma}(\Pi_d^+)\,.\qedhere
\]
\end{proof}

\cref{lem:formal-dimension} shows that the basic local-unitary invariants of $\Pi_d^-$ and $\Pi_d^+$ are captured by \emph{the same} low-degree polynomial, with the symmetric family appearing at negative values of the dimension.
We now extend this correspondence to arbitrary local-unitary invariant polynomials by expanding them into the local-unitary basis polynomials.

In the following lemma, we consider algorithms that can query the projectors $\Pi_{d}^{\pm}$ for
different values of $d$.  When this happens, we imagine that the projectors act on the first $d$
dimensions of a pair of $\sqrt{D}$-dimensional subspace, and act as $0$ elsewhere.  Since we will
eventually twirl by a pair of Haar random unitaries, the choice of which subspace the projectors act on will not matter.

\begin{lemma}
\label{lem:lu-univariate-reduction}
Let $G$ be a real, local-unitary invariant polynomial of degree at most $L$ on operators acting on $\C^D\otimes\C^D$. 
Then there exists a polynomial $g\in\R[x]$ of degree at most $4L$ such that for every integer $4 \leq d \leq \sqrt{D}$,
\[
g(d)=G(\Pi_d^-)\,,
\]
and for every integer $1 \leq d \leq \sqrt{D}$,
\[
g(-d)=G(\Pi_d^+)\,.
\]
In particular, for every rank-one product projector $\psi \otimes \phi$ on
$\C^D\otimes\C^D$, $g(-1)=G(\psi \otimes \phi)$.
\end{lemma}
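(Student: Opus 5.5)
We would reduce the statement to \cref{lem:formal-dimension} by decomposing $G$ into homogeneous parts and then into the spanning invariants of \cref{thm:lu-invariant-generators}. The one real subtlety is that $G$ is an invariant on $\C^D\otimes\C^D$, while the generators in \cref{lem:formal-dimension} live on $(\C^d)^{\otimes 2}\otimes(\C^d)^{\otimes 2}$. These are bridged by the embedding described just before the lemma.

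\textbf{Step 1: decomposition.} Write $G=\sum_{s=0}^{L}G_s$ with $G_s$ homogeneous of degree $s$. Local-unitary conjugation $X\mapsto (U\otimes V)X(U\otimes V)^\dagger$ is linear, so it preserves the grading, and each $G_s$ is again local-unitary invariant. By \cref{thm:lu-invariant-generators}, applied with $d_A=d_B=D$, each $G_s$ with $s\geq1$ has the form
\[
G_s(X)=\sum_{\pi,\sigma\in S_s} c^{(s)}_{\pi,\sigma}\,\tr\bigl[X^{\otimes s}(R^{A}_\pi\otimes R^{B}_\sigma)\bigr],
\]
where the coefficients $c^{(s)}_{\pi,\sigma}$ depend only on $G$ and not on $d$. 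The constant term $G_0$ is a degree-$0$ polynomial in $x$. The coefficients may be complex a priori; since $G$ takes real values on Hermitian inputs, we can replace $g$ by its real part at the end.

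\textbf{Step 2: matching the embedded invariants.} Fix an isometric embedding $\iota_d\colon(\C^d)^{\otimes2}\hookrightarrow\C^{D}$. For example, embed $\C^d$ into $\C^{\sqrt D}$ and use $(\C^{\sqrt D})^{\otimes2}\cong\C^D$. Regard $\Pi_d^\pm$ as $(\iota_d\otimes\iota_d)\Pi_d^\pm(\iota_d\otimes\iota_d)^\dagger$, which is supported on $\operatorname{im}(\iota_d)\otimes\operatorname{im}(\iota_d)$.

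The key observation is that for any operator $P$ on $A_d\otimes B_d$, where $A_d=B_d=(\C^d)^{\otimes2}$,
\[
\tr\bigl[(\iota P\iota^\dagger)^{\otimes s}(R^{A}_\pi\otimes R^{B}_\sigma)\bigr]=\tr\bigl[P^{\otimes s}(\iota^\dagger)^{\otimes 2s}(R^{A}_\pi\otimes R^{B}_\sigma)\iota^{\otimes 2s}\bigr]=J_{\pi,\sigma}(P).
\]
This holds because permutation operators commute with tensor powers of an isometry: $R_\pi\,\iota^{\otimes s}=\iota^{\otimes s}R_\pi$ and $\iota^\dagger\iota=I$. So the ambient invariant restricts exactly to the $J_{\pi,\sigma}$ of \cref{lem:formal-dimension}, with the same $\pi,\sigma$. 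Therefore
\[
g(x)\coloneqq G_0+\sum_{s=1}^{L}\sum_{\pi,\sigma}c^{(s)}_{\pi,\sigma}\,q_{\pi,\sigma}(x)
\]
satisfies $g(d)=G(\Pi_d^-)$ for $4\le d\le\sqrt D$ and $g(-d)=G(\Pi_d^+)$ for $1\le d\le\sqrt D$. Since $\deg q_{\pi,\sigma}\le 4s\le 4L$, we get $\deg g\le 4L$.

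\textbf{Step 3: the rank-one claim.} For $d=1$, $\Pi_1^+$ is the projector onto $\ket{0000}$, which is a rank-one product projector $\psi_0\otimes\phi_0$ across $(12\mid34)$. Any other rank-one product projector is $\psi\otimes\phi=(U\otimes V)(\psi_0\otimes\phi_0)(U\otimes V)^\dagger$ for suitable local unitaries. Invariance of $G$ then gives $G(\psi\otimes\phi)=G(\Pi_1^+)=g(-1)$.

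The main obstacle is Step 2: checking that the generators of invariants on the large space $\C^D\otimes\C^D$ specialize, under the embedding, exactly to the $J_{\pi,\sigma}$ on the small space. The commutation of $R_\pi$ with $\iota^{\otimes s}$ settles this. One must also take care that the identification $\C^{D}\cong(\C^{\sqrt D})^{\otimes 2}$ is applied uniformly for all $d$, so that the coefficients $c^{(s)}_{\pi,\sigma}$ are genuinely independent of $d$.
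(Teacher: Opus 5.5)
Your proposal is correct and follows the paper's proof. Like the paper, you decompose $G$ into homogeneous local-unitary-invariant parts, expand each in the generators of \cref{thm:lu-invariant-generators}, substitute the polynomials $q_{\pi,\sigma}$ from \cref{lem:formal-dimension}, and use local-unitary invariance to identify $g(-1)$ with every rank-one product projector. Your isometry-commutation check and real-part remark make explicit the identification of the ambient invariants with the $J_{\pi,\sigma}$ on $(\C^d)^{\otimes 2}\otimes(\C^d)^{\otimes 2}$, which the paper leaves implicit in its embedding remark. That check works for any isometric embedding, even one chosen separately for each $d$, so your closing caveat about applying the identification uniformly in $d$ is unnecessary.
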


\begin{proof}
Write $G=\sum_{t=0}^L G_t$, where $G_t$ is homogeneous of degree $t$. 
Since $G$ is local-unitary invariant, each $G_t$ is also local-unitary invariant.

For $t\geq1$, \cref{thm:lu-invariant-generators} says we can express $G_t$ as a
linear combination of the basis polynomials
\[
J_{\pi,\sigma}(P)
=
\tr\left[
P^{\otimes t}
\left(R_\pi^A\otimes R_\sigma^B\right)
\right],
\qquad
\pi,\sigma\in S_t\,.
\]
\cref{lem:formal-dimension} tells us that each of these basis polynomials can be written as a univariate polynomial in the dimension $d$.  When we substitute those in, we find that there is a polynomial
$g_t$ in the dimension $d$, of degree at most $4t$, such that $g_t(d)=G_t(\Pi_d^-)$
for every $4 \leq d \leq \sqrt{D}$, and
$g_t(-d)=G_t(\Pi_d^+)$ for every $1 \leq d \sqrt{D}$.
For $t=0$, $G_0$ is constant, so we take $g_0=G_0$.

Therefore, $g\coloneqq\sum_{t=0}^L g_t$ has degree at most $4L$ and satisfies
\[
g(d)=G(\Pi_d^-) \qquad \text{and} \qquad g(-d)=G(\Pi_d^+)
\]
for every valid $d$.

Finally, note that every rank-one product projector on $\C^D\otimes\C^D$ is locally unitarily
equivalent to $\Pi_1^+$, so by the local-unitary invariance of $G$, $g(-1)=G(\psi \otimes \phi)$ for
every rank-one product projector $\psi \otimes \phi$.
\end{proof}
\section{The oracle separation}
\label{sec:oracle-separation}

We present the unitary-oracle separation between $\QMA$ and $\QMA(2)$. The separation is based on the antisymmetric-symmetric correspondence from \cref{sec:antisymmetric-projectors}.

\begin{restatable}{theorem}{oraclesep}
\label{thm:oracle-separation}
There exists a unitary oracle $\calO$ such that
\[
\QMA^\calO\neq\QMA(2)^\calO.
\]
\end{restatable}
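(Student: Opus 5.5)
We prove the theorem by combining a one-query $\QMA(2)$ upper bound for a promise problem with a $\QMA$ query lower bound derived from \cref{lem:lu-univariate-reduction} and \cref{thm:approximate_degree_bound}, followed by standard diagonalization.

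\textbf{The promise problem.} For each input length $n$, set $D=D(n)=2^{2n}$. The oracle $U_n = I-2\Pi$ acts on $\C^D\otimes\C^D$.
\begin{itemize}
\item In YES instances, $\Pi$ is a rank-one product projector $\psi\otimes\phi$.
\item In NO instances, $\Pi=(V\otimes W)\Pi_d^-(V\otimes W)^\dagger$, where $d=5j-1$ with $1\le j\le \lfloor(\sqrt D+1)/5\rfloor$ and $\Pi_d^-$ is embedded on the first $d^2$ coordinates of each factor.
\end{itemize}

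\textbf{The $\QMA(2)$ upper bound.} The verifier receives unentangled $2n$-qubit proofs $\ket\alpha\ket\beta$. It applies a controlled-$U_n$ in a Hadamard test, i.e.\ it measures $\{\Pi, I-\Pi\}$ using one query, and accepts on outcome $\Pi$. Completeness is $1$ with witness $\ket\psi\ket\phi$. By \cref{lem:product-value} and local-unitary invariance of $h_{\rm Sep}$, soundness is at most $1/3$, since $d\ge4$.

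\textbf{The $\QMA$ lower bound.} Suppose a $T$-query verifier with $m$-qubit witnesses decides the problem.
\begin{enumerate}
\item Amplify in place via Marriott--Watrous to completeness $1-2^{-2m}$ and soundness $2^{-2m}$ with $O(Tm)$ queries.
\item Replace the witness by the maximally mixed state. The acceptance probability $p(U)$ is then a real polynomial of degree $O(Tm)$ in the entries of $U$ and $U^\dagger$. Since $U=I-2\Pi$ is Hermitian, it is a polynomial of degree $L=O(Tm)$ in the entries of $\Pi$.
\item Average over Haar-random $V\otimes W$ to obtain a local-unitary invariant $G$ of degree $L$. By the promise, $G(\psi\otimes\phi)\ge 2^{-(m+1)}$ and $G(\Pi_d^-)\le 2^{-2m}$ for each valid $d$. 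Here we use that both promise classes are closed under local conjugation, so every averaged instance still obeys the bounds.
\item Apply \cref{lem:lu-univariate-reduction} to obtain a univariate $g$ of degree at most $4L$ with $g(-1)=G(\psi\otimes\phi)$ and $g(5j-1)=G(\Pi^-_{5j-1})$.
\item Set $q(j) := 1 - g(5j-1)/g(-1)$. Then $q(0)=0$ and $1-2^{-m+1}\le q(j)\le 1$ for $j=1,\dots,N$, where $N=\Theta(\sqrt D)$. The upper bound $q(j)\le1$ follows from $g\ge0$ at those points, because they are acceptance probabilities.
\item \cref{thm:approximate_degree_bound} gives $Tm=\Omega(\min\{\sqrt D,\sqrt{\sqrt D\, m}\})$, hence $T^2m=\Omega(D^{1/2})=\Omega(2^n)$.
\end{enumerate}
Thus any polynomial-time $\QMA$ verifier, which has polynomial $T$ and $m$, fails on some instance for large $n$.

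\textbf{Diagonalization.} Define the language $L_{\calO}=\{1^n : U_n \text{ is a YES instance}\}$. Enumerate $\QMA$ verifiers $M_1,M_2,\dots$ with polynomial time bounds. For each $M_i$, pick a fresh $n_i$ large enough that it is beyond all previously fixed lengths and lengths queried earlier. Choose $U_{n_i}$ on which $M_i$ errs, which the lower bound guarantees exists.

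The main obstacle is this step. The lower bound says only that some instance defeats $M_i$ given fixed oracles at other lengths. But $M_i$ can also query oracles at other lengths, and its witness quantifier ranges over all states. To handle this, fix all other lengths before choosing $U_{n_i}$: use trivial YES instances at unset lengths, and do not alter them later. $M_i$'s acceptance probability is then still a polynomial in $U_{n_i}$ of degree $O(T)$. The one gap left to check is that $M_i$ might query lengths set later. We handle this by choosing $n_{i+1}$ beyond $M_i$'s running time on $1^{n_i}$, so later choices cannot change $M_i$'s behavior on $1^{n_i}$. This puts $L_{\calO}\in\QMA(2)^\calO\setminus\QMA^\calO$.
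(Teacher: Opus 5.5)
Your proposal is correct and follows the paper's proof essentially step for step: the same $\AES$ promise problem, the one-query Hadamard-test $\QMA(2)$ verifier, local-unitary averaging followed by \cref{lem:lu-univariate-reduction} and the approximate degree of $\mathrm{OR}$ at $j=0$ versus $j=1,\dots,N$, and the same stage-wise diagonalization that fixes every other component $M_i$ can query before choosing $U_{n_i}$. The differences are cosmetic. The paper's formal lower bound uses the trace-power polynomial $\tr(M_P^\ell)$ instead of Marriott--Watrous amplification with a maximally mixed witness, which the paper notes is equivalent. It pads unused lengths with $P_n=0$ declared NO rather than fixed YES instances. And it takes the controlled reflection $\calO_P$ as the oracle, which your Hadamard test implicitly requires.
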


We begin by defining the promise problem underlying the separation. Let 
\begin{equation}
	\label{eq:n_choice}
	N = N(D) = \max \{j \in \mathbb N\ :\ (5j - 1)^2 \leq D\}\,.
\end{equation}
We will consider $d_j \coloneq 5j - 1$ for each $j \in [N]$. \footnote{The choice of $d_j$ is so that the affine map $d=5x-1$ sends $x=0$ to the formal symmetric point $d=-1$, while sending $x=j$ to the antisymmetric dimension $d_j$.  In particular, $d_1=4$ is the smallest dimension for which $\Pi_d^-$ is nonzero.}
Recall that $\Pi_d^-$ acts on the bipartite space $(\C^d)^{\otimes2}\otimes(\C^d)^{\otimes2}$, whose
local dimension is $d^2$. By choice of $N$, $\Pi_{d_1}^-,\ldots,\Pi_{d_N}^-$ can act on as projectors on $\C^D\otimes\C^D$. 
We fix such embeddings and, in a slight abuse of notation, continue to write $\Pi_{d_j}^-$ for the resulting projectors.

We define a family of yes and no instances for the entangled subspace problem, and we denote the problem of deciding between yes and no instances from this family the antisymmetric entangled subspace problem.

\begin{definition}[Antisymmetric Entangled Subspace problem ($\AES$)]
\label{def:aes}
Fix a rank-one product projector $\psi \otimes \phi$ on $\C^D\otimes\C^D$, and define
\begin{align*}
\calY_{D} &\coloneqq \Braces{ (U\otimes V)\psi \otimes \phi(U\otimes V)^\dagger : U,V\in U(D) }\,,\\
\calN_{D} &\coloneqq \bigcup_{j=1}^{N(D)} \Braces{ (U\otimes V)\Pi_{d_j}^-(U\otimes V)^\dagger : U,V\in U(D) }\,.
\end{align*}
For a projector $P$ on $\C^D\otimes\C^D$, define the unitary oracle
\begin{equation}
\label{eq:projector-oracle}
\calO_P \coloneqq \proj{0}\otimes I + \proj{1}\otimes(I-2P)\,.
\end{equation}
The \emph{Antisymmetric Entangled Subspace problem}, denoted
$\AES_{D}\coloneqq(\calY_D,\calN_D)$, is the promise problem
of deciding, given oracle access to $\calO_P$, whether
$P\in\calY_D$ or $P\in\calN_D$.
\end{definition}

By definition, YES instances are rank-one product projectors on
$\C^D\otimes\C^D$, while the NO instances are locally unitarily
equivalent to one of $\Pi_{d_1}^-,\ldots,\Pi_{d_N}^-$.
Since local unitaries and zero-padding preserve product value, our YES instances have $h_{\rm Sep}(P) = 1$, and 
\cref{lem:product-value} implies that every NO instance satisfies 
$h_{\rm Sep}(P)\leq\frac{1}{3}$.

The promise problem $\AES_D$ is a restriction of the Entangled Subspace problem of She and Yuen~\cite{SY22}. 
Recall that in the $(a,b)$-Entangled Subspace problem, a YES subspace contains a state that is $a$-close in trace distance to a product state, whereas in a NO subspace, every state is at least $b$-far from every product state.
The problem $\AES_D$ is a special case of the $(0,2/3)$-Entangled Subspace problem.
An interesting distinction between the general entangled subspace problem and the antisymmetric subspace problem is that our YES instances only have to consist of projectors onto rank one subspaces.
On the other hand, every NO instance is at least $2/3$-far from
every product state, as follows from \cref{lem:product-value}.

\subsection{The \texorpdfstring{$\QMA(2)$}{QMA(2)} upper bound}

The $\QMA(2)$ upper bound for the Entangled Subspace problem of
She and Yuen~\cite{SY22} already applies to our promise problem.
For $\AES_D$, however, the verification is particularly simple,
so we include it for completeness.

\begin{proposition}
\label{prop:qma2-upper-bound}
$\AES_D$ has a one-query $\QMA(2)$ verifier with
completeness $1$ and soundness at most $1/3$.
\end{proposition}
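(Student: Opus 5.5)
The verifier receives two unentangled witnesses $\ket{\alpha}\in\C^D$ and $\ket{\beta}\in\C^D$. It prepares a control qubit in $\ket{1}$ and applies $\calO_P$ once to $\ket{1}\ket{\alpha}\ket{\beta}$. This acts as $I-2P$ on the witness register. The verifier then needs to learn whether the witness lies in the range of $P$, which the reflection alone does not reveal as a measurement outcome.

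To turn the reflection into a measurement of $P$ with one query, I would use the standard phase-kickback trick. Prepare the control qubit in $\ket{+}$, apply $\calO_P$, and measure the control in the $\{\ket{+},\ket{-}\}$ basis, accepting on outcome $\ket{-}$. For any state $\ket{\psi}$ on $\C^D\otimes\C^D$, decompose it as $\ket{\psi}=P\ket{\psi}+(I-P)\ket{\psi}$. Then $\calO_P\ket{+}\ket{\psi}$ equals
\[
\ket{+}(I-P)\ket{\psi}+\ket{-}P\ket{\psi}.
\]
Hence the acceptance probability is exactly $\bra{\psi}P\ket{\psi}$. By linearity and convexity, this extends to mixed witnesses.

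Completeness is immediate. If $P=(U\otimes V)(\psi\otimes\phi)(U\otimes V)^\dagger\in\calY_D$, the honest provers send $U\ket{\psi}$ and $V\ket{\phi}$ respectively. The witness then lies in the range of $P$, so the verifier accepts with probability $1$.

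For soundness, the $\QMA(2)$ promise says the two witnesses are unentangled, so the joint state is $\rho_1\otimes\rho_2$. The acceptance probability $\tr[P(\rho_1\otimes\rho_2)]$ is bilinear in $\rho_1,\rho_2$, so it is maximized at pure states. It is therefore at most $h_{\rm Sep}(P)$.

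For $P\in\calN_D$, local unitaries $U\otimes V$ merely relabel the product states, and the zero-padding embedding of $\C^{d_j^2}\otimes\C^{d_j^2}$ into $\C^D\otimes\C^D$ does not increase the product value. The bound on padding holds because a product vector's overlap with the embedded projector equals the overlap of its compressed, sub-normalized product projection. It follows that $h_{\rm Sep}(P)=h_{\rm Sep}(\Pi_{d_j}^-)$. Since $d_j\geq 4$, \cref{lem:product-value} gives $h_{\rm Sep}(P)\leq 1/3$.

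No step here is a real obstacle. The only points needing care are getting the phase-kickback sign right, so that accepting on $\ket{-}$ measures $P$ and not $I-P$, and justifying that padding and local unitaries preserve the product value.
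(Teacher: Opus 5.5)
Your proposal is correct and follows the paper's proof essentially verbatim. The paper also measures $\{P, I-P\}$ with a single query via the Hadamard test; your $\{\ket{+},\ket{-}\}$-basis measurement of the control is the same thing. It likewise gets completeness from the product state spanning the range of $P$, and bounds soundness by $h_{\rm Sep}(P)=h_{\rm Sep}(\Pi^-_{d_j})\leq 1/3$ via local-unitary invariance, zero-padding, and \cref{lem:product-value}; your bilinearity and compression arguments simply make explicit the mixed-witness and padding details the paper leaves implicit.
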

\begin{proof}
The $\mathsf{QMA}(2)$ verifier will receive a copy of a product state and measure the projector $P$. For completeness, we briefly remind the reader that the two-outcome
measurement $\{I-P,P\}$ can be done with a single query to $\mathcal{O}_{P}$ via the Hadamard test: 
On input $\ket{\psi}$, prepare the
control qubit in $\ket{+}$, query $\calO_P$, and applying a Hadamard
to the control register to get the state
\[
\ket{0} \otimes (I-P)\ket{\psi} + \ket{1}\otimes P\ket{\psi}. \]
Then, measuring the control qubit measures $\{P, I-P\}$. 

If $P\in\calY_D$, then the range of $P$ is spanned by a product state,
which the two provers can send to make the verifier accept with probability $1$.

If $P\in\calN_D$, the projector $P$ is
locally unitarily equivalent to $\Pi_{d_j}^-$ for some $j \in [N(D)]$. 
Hence, by \cref{lem:product-value},
\[
h_{\rm Sep}(P) = h_{\rm Sep}(\Pi^-_{d_j}) \leq \frac 1 3 \,,
\]
where we also use that the zero-padding $\Pi^{-}_{d_j}$ into $\C^D\otimes\C^D$ does not
change the product value. Therefore every product witness is accepted
with probability at most $1/3$.
\end{proof}

\subsection{The \texorpdfstring{$\QMA$}{QMA} lower bound}

We now prove that any $\QMA$ verifier for $\AES_D$ must either use many oracle queries or receive a large witness.
Our proof has two main steps. On one hand, we apply the guessing lemma~\cite{aaronson2011impossibility} (or in-place amplification~\cite{marriott2005quantum}) to find a low-degree polynomial that approximates the maximum acceptance probability of the verifier.  Then, we use the symmetric--antisymmetric correspondence from \cref{subsec:lu-antisym} to lower bound the degree of this polynomial by the approximate degree of $\mathrm{OR}$.

Our main theorem is as follows.

\begin{theorem}
\label{thm:qma-lower-bound}
Fix constants $0\leq s<c\leq1$.
Suppose a $\QMA$ verifier decides $\AES_D$ with completeness $c$ and soundness $s$, using an $m$-qubit witness and $q$ queries to $\calO_P$. Then
\begin{equation*}
    q^2(m+1) = \Omega(\sqrt D)\,.
\end{equation*}
\end{theorem}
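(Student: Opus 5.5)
We follow the amplify-and-guess template from the overview. Starting from the given verifier with completeness $c$ and soundness $s$, we first apply Marriott--Watrous in-place amplification, with $k=O(m+1)$ repetitions (the constant depends on $c-s$). This yields a verifier $V'$ that uses the same $m$-qubit witness register, makes $q'=O(q(m+1))$ queries, and has completeness $1-2^{-2(m+1)}$ and soundness $2^{-2(m+1)}$. We then run $V'$ on the maximally mixed witness $I/2^m$. The resulting witness-free acceptance probability $f(P)$ satisfies $f(P)\geq 2^{-m}(1-2^{-2(m+1)})\geq 2^{-(m+1)}$ on YES instances and $f(P)\leq 2^{-2(m+1)}$ on NO instances. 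Since $\calO_P$ has entries that are affine in the entries of $P$, $f$ is a real polynomial of degree at most $2q'$ in the entries of $P$.

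Next we symmetrize. Define $G(P)\coloneqq\E_{U,V}f\bigl((U\otimes V)P(U\otimes V)^\dagger\bigr)$ with Haar-random $U,V\in U(D)$. Then $G$ is a real local-unitary invariant polynomial of degree at most $L=2q'$. Both $\calY_D$ and $\calN_D$ are closed under local unitary conjugation, so $G$ obeys the same YES/NO bounds as $f$. By \cref{lem:lu-univariate-reduction}, there is a univariate $g$ of degree at most $4L=O(q(m+1))$ with $g(-1)=G(\psi\otimes\phi)\geq 2^{-(m+1)}$ and $g(d_j)=G(\Pi^-_{d_j})\in[0,2^{-2(m+1)}]$ for all $j\in[N]$; nonnegativity holds because $G$ is an average of probabilities. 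The embedding of each $\Pi_{d_j}^-$ into $\C^D\otimes\C^D$ is via the first $d_j$ dimensions inside $\C^{\sqrt D}$-subfactors; the Haar twirl makes the choice irrelevant, and $d_j^2\le D$ by the definition of $N$.

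Now we substitute $x\mapsto d=5x-1$ and normalize. Let $h(x)\coloneqq g(5x-1)/g(-1)$, so that $h(0)=1$ and $h(j)\in[0,2^{-(m+1)}]$ for $j\in[N]$. Set $r(x)\coloneqq 1-h(x)$. Then $r(0)=0$ and $1-2^{-(m+1)}\leq r(j)\leq 1$ for $1\le j\le N$, with $\deg r=O(q(m+1))$. Applying \cref{thm:approximate_degree_bound} with $\eps=2^{-(m+1)}$ gives $q(m+1)\geq\Omega(\min\{N,\sqrt{N(m+1)}\})$. Since $N=\Theta(\sqrt D)$, we split into two cases. If the minimum is $\sqrt{N(m+1)}$, squaring gives $q^2(m+1)^2=\Omega(N(m+1))$, hence $q^2(m+1)=\Omega(\sqrt D)$. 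Otherwise $q(m+1)=\Omega(N)$; this case arises only when $m+1\gtrsim N$, and then $q^2(m+1)\geq q(m+1)=\Omega(\sqrt D)$ since $q\ge1$. (The case $q=0$ is impossible, as a query-free verifier cannot distinguish the two families.)

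The main obstacle is the bookkeeping in the first step. We must ensure that amplification keeps the witness at $m$ qubits (Marriott--Watrous does, whereas naive parallel repetition would not), and that the multiplicative gap survives guessing with enough room: soundness $2^{-2(m+1)}$ against a YES value of $2^{-(m+1)}$ leaves a ratio of $2^{-(m+1)}$, which matches the $\eps$ needed in the OR bound. A secondary point is checking that $G$, as defined, is genuinely a polynomial in $P$ (it is, since Haar averaging of a polynomial is a polynomial) so that \cref{lem:lu-univariate-reduction} applies.
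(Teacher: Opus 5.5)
Your proposal is correct and follows essentially the same route as the paper. The paper replaces explicit Marriott--Watrous amplification plus a maximally mixed witness with the trace power $H_\ell(P)=\tr(M_P^\ell)$, $\ell=O(m+1)$, which is directly a degree-$2q\ell$ polynomial satisfying $\lambda(P)^\ell\le H_\ell(P)\le 2^m\lambda(P)^\ell$. It then performs the same Haar twirl, univariate reduction via \cref{lem:lu-univariate-reduction}, substitution $d=5x-1$, normalization and OR bound; your explicit treatment of the $\min\{N,\sqrt{N(m+1)}\}$ branch using $q\ge 1$ covers a case the paper's write-up does not spell out.
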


We begin by converting the maximum acceptance probability of a $\QMA$ verifier into a low-degree polynomial.
Fix a verifier $V$.
For each projector $P$ specifying an oracle $\calO_P$, let $M_P$ denote its acceptance operator on the witness register, so that
\[
\Pr[V^{\calO_P}\text{ accepts }\rho] = \tr(M_P\rho)
\]
for every witness state $\rho$.
The maximum acceptance probability of $V$ is thus
\[
\lambda(P)\coloneqq\lambda_{\max}(M_P)\,.
\]
Since $\calO_P$ has degree at most one in the entries of $P$, the entries of $M_P$ are low-degree polynomials in the entries of $P$.
The obstacle is that taking the largest eigenvalue does not preserve this polynomial structure.
We therefore use trace powers of $M_P$, which retain the polynomial dependence on $P$ while still corresponding to its largest eigenvalue.

\begin{lemma}[Trace power method, or the guessing lemma~\cite{aaronson2011impossibility, marriott2005quantum}]
\label{lem:qma-trace-polynomial}
For every positive integer $\ell$, the function
\[
H_\ell(P)\coloneqq \tr(M_P^\ell)
\]
is a real polynomial of degree at most $2q\ell$ in the entries of $P$, and satisfies
\[
\lambda(P)^\ell \leq H_\ell(P) \leq 2^m\lambda(P)^\ell\,.
\]
\end{lemma}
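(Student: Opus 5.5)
The proof has two parts: polynomiality of $H_\ell$, and the two-sided eigenvalue sandwich.

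\textbf{Polynomiality.} Write the verifier as a fixed sequence $V = W_q \calO_P W_{q-1} \cdots \calO_P W_0$. Here the $W_i$ are oracle-independent unitaries acting on the witness register $\mathsf{W}$ (of $m$ qubits), an ancilla register initialized to $\ket{0}$, and the oracle registers. Let $\Pi_{\rm acc}$ denote the accepting projector. Then
\[
M_P = (I_{\mathsf W}\otimes\bra{0}) V^\dagger \Pi_{\rm acc} V (I_{\mathsf W}\otimes\ket{0})\,.
\]
Since $\calO_P = \proj{0}\otimes I + \proj{1}\otimes(I-2P)$ is affine in the entries of $P$, each entry of $V$ is a polynomial of degree at most $q$ in the entries of $P$. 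The entries of $V^\dagger$ are polynomials of degree at most $q$ in the conjugate entries. Because $P$ is Hermitian, the conjugate of $P_{ab}$ is $P_{ba}$, so these are again polynomials in the entries of $P$. Hence every entry of $M_P$ has degree at most $2q$. The quantity $\tr(M_P^\ell)$ is a sum of products of $\ell$ entries of $M_P$, so it has degree at most $2q\ell$.

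It is real-valued because $M_P$ is Hermitian. To obtain a polynomial with real coefficients, take the real part, i.e.\ average the polynomial with its conjugate. This preserves both the degree and the values on Hermitian $P$.

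\textbf{Eigenvalue bounds.} Since $0 \preceq M_P \preceq I$, we have $M_P^\ell \succeq 0$, with eigenvalues $\lambda_i^\ell \ge 0$, where $\lambda_1=\lambda(P)$. Therefore
\[
\lambda(P)^\ell \le \sum_i \lambda_i^\ell = \tr(M_P^\ell) \le 2^m \lambda(P)^\ell\,,
\]
where the upper bound holds because there are $2^m$ eigenvalues, each at most $\lambda(P)^\ell$.

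\textbf{Main obstacle.} The only delicate step is the bookkeeping of degree under conjugation. One must make sure the bound is $2q$ per entry of $M_P$ rather than something larger, which comes from noting that $\Pi_{\rm acc}$ and the $W_i$ contribute no degree. One must also make sure the controlled-oracle structure keeps the dependence affine in $P$. Both points are direct from \eqref{eq:projector-oracle}.
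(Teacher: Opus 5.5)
Your proof is correct and takes the same route as the paper. Each entry of $M_P$ has degree at most $2q$ because $\calO_P$ is affine in $P$, so $\tr(M_P^\ell)$ has degree at most $2q\ell$, and the sandwich follows from $\tr(M_P^\ell)=\sum_i\lambda_i^\ell$ with $0\le\lambda_i\le\lambda(P)$ over $2^m$ eigenvalues. You spell out what the paper leaves implicit: the explicit form of $M_P$, Hermiticity of $P$ to handle $V^\dagger$, and nonnegativity of the eigenvalues.

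One small quibble: averaging with the conjugate gives real \emph{coefficients} only when the polynomial is written in the real variables $\mathrm{Re}\,P_{ab},\mathrm{Im}\,P_{ab}$, not in the complex entries $P_{ab}$ themselves. For example, $iP_{12}$ becomes $\tfrac{i}{2}(P_{12}-P_{21})$. This does not matter, because real-valuedness on Hermitian $P$ is all that is used later.
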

\begin{proof}
Since $\calO_P$ has degree at most one in the entries of $P$ and $V$ makes
$q$ oracle queries, every entry of $M_P$ is a polynomial of degree at most
$2q$. Hence $H_\ell(P)=\tr(M_P^\ell)$ has degree at most $2q\ell$. If $\lambda_1,\ldots,\lambda_{2^m}$ are the eigenvalues of $M_P$, then
$H_\ell(P) = \sum_{i=1}^{2^m}\lambda_i^\ell$.
Since $\lambda(P)=\max_i\lambda_i$, the claimed bounds follow immediately.
\end{proof}

Although the above lemma is phrased in terms of a function $H_{\ell}$, we note that this function exactly implements the guessing lemma from \cite{aaronson2011impossibility}, in the sense that $M_{P}^{\ell}$ is the result of applying the in-place amplification algorithm of Marriot and Watrous~\cite{MW04}, and taking the trace corresponds to inputting a uniformly random witness (up to scaling by $2^{m}$).  Thus, this function $H_{\ell}$ actually corresponds to the accepting probability of a $q\ell$-query quantum algorithm making queries to $P$.   

We are now ready to prove the main theorem. 

\begin{proof}[Proof of \cref{thm:qma-lower-bound}]
Let $V$ denote the $\QMA$ verifier. Pick a sufficiently large $\ell = O(m+1)$ such that $2^m\left(s/c\right)^\ell \leq 2^{-(m+1)}$. Let $H\coloneqq H_\ell$ be the polynomial from \cref{lem:qma-trace-polynomial}.
If $P$ is a YES instance, then $\lambda(P) \geq  2/3$, and hence
\[
H(P)\geq c^\ell.
\]
If $P$ is a NO instance, then $\lambda(P) \leq 1/3$, and hence
\[
0\leq H(P) \leq 2^m s^\ell \leq 2^{-(m+1)} c^\ell\,.
\]
Note also that $\deg(H)\leq 2q\ell$.
However, $H$ is not necessarily local-unitary invariant. We find a local-unitary invariant polynomial by averaging $H$ over local unitaries. Define
\[
\overline H(P) \coloneqq \E_{U,V\in U(D)} H\!\left( (U\otimes V)P(U\otimes V)^\dagger \right)\,,
\]
where the expectation is over the Haar measure.
Then $\overline H$ is a real local-unitary invariant polynomial of degree
at most $2q\ell$. To see this, first note that conjugation by
$U\otimes V$ is linear in $P$, and therefore cannot increase the degree
of $H$. The local-unitary invariance follows from the fact that Haar
measure is invariant under left and right multiplication by a fixed
unitary.

Because applying local unitaries maps yes instances to yes instances and no instances to no instances, the bounds we found for the value of $H$ also apply to $\overline{H}$. In particular,
\[
\overline H(E)\geq c^\ell\,,
\]
while for every $j\in[N]$, with $N = N(D)$,
\[
0\leq \overline H(\Pi^-_{d_j}) \leq 2^{-(m+1)} c^\ell\,.
\]

By \cref{lem:lu-univariate-reduction}, there is a univariate polynomial
$h$ of degree at most $8q\ell$ such that $h(-1)=\overline H(E)$ and $h(d_j)=\overline
H(\Pi^-_{d_j})$  for every $j\in[N]$.
To apply \cref{thm:approximate_degree_bound}, we need to normalize and flip the polynomial so that it goes between $0$ and a value close to $1$ (as opposed to $c^{\ell}$ and $2^{-(m+1)} c^{\ell}$).  To that end, define
\[
p(x) \coloneqq 1-\frac{h(5x-1)}{h(-1)}\,.
\]
Since $h(-1)=\overline H(E)\geq c^\ell>0$, this is well defined.
Moreover, we have that $p(0) = 0$, and for every $j \in [N]$,
    \[ 1-2^{-(m+1)} \leq p(j) \leq 1\,. \]
Thus, by \cref{thm:approximate_degree_bound}, $\deg(p) = \Omega\left(\sqrt{N(m+1)}\right)$.
On the other hand, $\deg(p)\leq 8q\ell$.
Since $\ell=O(m+1)$, we conclude that $q^2(m+1) = \Omega(N)$. Finally, recalling from
\Cref{eq:n_choice} that $N = \floor{\tfrac 1
5(\sqrt D -1)}$, this yields
\[
q^2(m+1) = \Omega(\sqrt D)\,. \qedhere
\]
\end{proof}

The oracle separation \Cref{thm:oracle-separation} now follows by relatively standard arguments. For completeness, we include a proof of the diagonalization argument in the appendix.

\section{The no-disentanglers theorem}
We derive the no-disentanglers theorem as a consequence of the
$\QMA$ lower bound from \cref{thm:qma-lower-bound}. We rule out polynomial-size disentanglers throughout the full regime $\eps+\delta<1$. In particular, if each output register contains $n$ qubits, then any $(\eps,\delta)$-disentangler requires $2^{\Omega_{\eps,\delta}(n)}$ input qubits.

At a high level, the idea is that a good disentangler allows us to construct a $\mathsf{QMA}$ algorithm for the $\mathsf{AES}_{D}$ problem by using the disentangler and then calling the $\QMA(2)$ algorithm. We will see that the parameters $\epsilon$ and $\delta$ eat into the completeness / soundness gap of the $\QMA(2)$ verifier, and thus there are slight complications when $\epsilon + \delta$ become larger than $2/3$; we include a proof in the appendix which mitigates this issue via parallel amplification of the $\QMA(2)$ verifier.
\begin{lemma}
\label{lem:qma_disentangler_completeness_soundness}
    Assume that there exists a $\mathsf{QMA}(2)$ verifier taking a pair of witnesses with dimension $D'$, with completeness $c$ and soundness $s$, for the $\mathsf{AES}_{D}$ problem, making $t$ queries to the oracle.  If exists a $(\epsilon, \delta)$-disentangler, $\Lambda$, with output dimension $D'$ and $m$ input qubits, then there exists a $\mathsf{QMA}$ algorithm for the $\mathsf{AES}_{D}$ problem making $t$-queries, taking a witness of size $m$, with completeness $c - \delta$ and soundness $s + \epsilon$.  
\end{lemma}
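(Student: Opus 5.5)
The plan is to build the $\QMA$ verifier by composing the disentangler with the given $\QMA(2)$ verifier, and then to use the fact that acceptance probability is a linear, $1$-Lipschitz function of the input state. Let $V_2$ be the $\QMA(2)$ verifier, which expects two unentangled witnesses on $\C^{D'}\otimes\C^{D'}$ and makes $t$ queries. The $\QMA$ verifier $V_1$ takes an $m$-qubit witness $\rho$, applies $\Lambda$ to get a state $\Lambda(\rho)$ on $\C^{D'}\otimes\C^{D'}$, and runs $V_2$ on $\Lambda(\rho)$. In the query model $\Lambda$ costs no queries, so $V_1$ makes exactly $t$ queries and uses an $m$-qubit witness.

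The first step is to record that, for a fixed oracle $\calO_P$, the acceptance probability of $V_2$ on an arbitrary (possibly entangled, possibly mixed) state $\sigma$ on its two witness registers is $\tr(M_P\sigma)$. Here $M_P$ is the effective acceptance operator, obtained by appending ancillas, running the circuit and oracle calls, measuring, and pulling back to the witness registers, and it satisfies $0\preceq M_P\preceq I$. The $\QMA(2)$ guarantees only constrain this quantity on product inputs, but the circuit itself is well defined on every input. From $0\preceq M_P\preceq I$ we get the standard bound $|\tr(M_P\sigma)-\tr(M_P\sigma')|\leq\frac12\norm{\sigma-\sigma'}_1$.

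Completeness then goes as follows. If $P\in\calY_D$, there is a product witness $\ket{a}\ket{b}$ with $\tr(M_P\,\proj{a}\otimes\proj{b})\geq c$. This state is separable, so by condition 2 of \cref{def:disentangler} there is an input $\rho$ with $\frac12\norm{\Lambda(\rho)-\proj{a}\otimes\proj{b}}_1\leq\delta$. Hence $V_1$ accepts $\rho$ with probability at least $c-\delta$.

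Soundness goes as follows. If $P\in\calN_D$ and $\rho$ is any witness, condition 1 gives a separable $\sigma=\sum_i p_i\,\alpha_i\otimes\beta_i$ with $\frac12\norm{\Lambda(\rho)-\sigma}_1\leq\epsilon$. By linearity, $\tr(M_P\sigma)=\sum_i p_i\tr(M_P\,\alpha_i\otimes\beta_i)$. Each pure product term is accepted with probability at most $s$ by the soundness of $V_2$, and mixed product states are in turn convex combinations of pure product states, so $\tr(M_P\sigma)\leq s$. Therefore $V_1$ accepts with probability at most $s+\epsilon$.

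The only step needing care is this extension of $V_2$'s soundness from product witnesses to arbitrary separable states, via linearity of $\sigma\mapsto\tr(M_P\sigma)$ and convexity. Once acceptance is written in this operator form, everything else is the Lipschitz bound.
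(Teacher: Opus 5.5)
Your proposal is correct and follows the same route as the paper. You compose $\Lambda$ with the $\QMA(2)$ verifier and use the trace-distance bound on acceptance probabilities, with condition 2 of the disentangler definition giving completeness $c-\delta$ and condition 1 giving soundness $s+\epsilon$. You are in fact slightly more careful than the paper: its proof asserts that $\Lambda(\rho)$ is $\epsilon$-close to a product state $\sigma_1\otimes\sigma_2$, whereas the definition only guarantees a separable state. Your convexity step $\tr(M_P\sigma)=\sum_i p_i\tr(M_P\,\alpha_i\otimes\beta_i)\leq s$ closes exactly that gap.
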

\begin{proof}
    The $\mathsf{QMA}$ algorithm simply applies the disentangler to their $m$-qubit witness and then calls the $\mathsf{QMA}(2)$ verifier on the resulting state and its oracle.  By the guarantee of the $\mathsf{QMA}(2)$ verifier, if $\mathcal{O}$ is a YES instance of the $\mathsf{AES}_{D}$ problem, there exists a pair of states $\sigma_1 \otimes \sigma_2$ that causes it to accept with probability $c$.  By the definition of a $(\epsilon, \delta)$-disentangler, there is a state $\rho$ on $m$-qubit such that $\Lambda(\rho)$ is $\delta$-close to $\sigma_1 \otimes \sigma_2$ in trace distance.  On this state, the $\mathsf{QMA}$ verifier accepts with probability $c - \delta$.  

    Similarly, if $\mathcal{O}$ is a NO instance of the $\mathsf{AES}_{D}$ problem, then for all states $\sigma_1 \otimes \sigma_2$, the $\mathsf{QMA}(2)$ verifier accepts with probability at most $s$.  By the definition of a $(\epsilon, \delta)$-disentangler, for all $\rho$ given to the $\mathsf{QMA}$ verifier, there exists $\sigma_1 \otimes \sigma_2$ that is $\epsilon$-close in trace distance to $\Lambda(\rho)$.  Therefore, the $\mathsf{QMA}$ verifier accepts with probability at most $s + \epsilon$.  
\end{proof}

From this, we can immediately get an almost tight bound on disentangler that satisfy $\epsilon + \delta < 2/3$.  
\begin{corollary}
    Fix constants $\epsilon, \delta \geq 0$ such that $\epsilon + \delta < 2/3$.  Then every $(\epsilon, \delta)$-disentangler with output dimension $D$ requires 
    \begin{equation*}
        m = \Omega_{\epsilon, \delta}(\sqrt{D})\,.
    \end{equation*}
\end{corollary}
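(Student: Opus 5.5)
The corollary follows by combining the one-query $\QMA(2)$ verifier of \cref{prop:qma2-upper-bound}, the simulation of \cref{lem:qma_disentangler_completeness_soundness}, and the lower bound of \cref{thm:qma-lower-bound}.

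Let $\Lambda$ be an $(\epsilon,\delta)$-disentangler with output space $\C^D\otimes\C^D$ and $m$ input qubits. Consider the problem $\AES_{D}$ on the same local dimension $D$, so that its product witnesses live exactly in the output space of $\Lambda$. We may assume $D$ is large enough that $N(D)\ge 1$; otherwise the bound is trivial.

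\cref{prop:qma2-upper-bound} gives a one-query $\QMA(2)$ verifier for $\AES_D$ with completeness $c=1$ and soundness $s\le 1/3$. Its witnesses are a pair of states on $\C^D$ each, which matches the output of $\Lambda$.

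\cref{lem:qma_disentangler_completeness_soundness} then converts this into a $\QMA$ verifier for $\AES_D$. This verifier makes a single query, uses an $m$-qubit witness, and has completeness $c'=1-\delta$ and soundness $s'=1/3+\epsilon$. The hypothesis $\epsilon+\delta<2/3$ is exactly the condition $s'<c'$, and the gap $c'-s'=2/3-(\epsilon+\delta)$ is a positive constant depending only on $\epsilon,\delta$.

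Apply \cref{thm:qma-lower-bound} with $q=1$ and constants $s'<c'$ to get $m+1=\Omega(\sqrt D)$, hence $m=\Omega_{\epsilon,\delta}(\sqrt D)$.

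The one point needing care is that the hidden constant in \cref{thm:qma-lower-bound} depends on $(s',c')$. In its proof one takes $\ell=O\bigl((m+1)/\log(c'/s')\bigr)$ so that $2^m(s'/c')^\ell\le 2^{-(m+1)}$. This dependence goes only through $\log(c'/s')$, which is bounded below by a positive constant determined by $\epsilon$ and $\delta$, so it is absorbed into the $\Omega_{\epsilon,\delta}$ notation.
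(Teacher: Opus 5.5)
Your proposal is correct and follows the paper's proof essentially verbatim: you take the one-query verifier of \cref{prop:qma2-upper-bound}, pass it through \cref{lem:qma_disentangler_completeness_soundness} to get completeness $1-\delta$ and soundness $1/3+\epsilon$, and then apply \cref{thm:qma-lower-bound} with $q=1$. Your added remark is correct and usefully spells out what the paper compresses into ``since $\epsilon$ and $\delta$ are constants'': the hidden constant depends on $(\epsilon,\delta)$ only through $\log(c'/s')$, which stays bounded away from zero when $\epsilon+\delta<2/3$.
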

\begin{proof}
    Note that the $\mathsf{QMA}(2)$ verifier from \Cref{prop:qma2-upper-bound} has completeness $1$ and soundness $1/3$, makes $1$ query, and has input dimension $D$.  Thus, the existence of an $(\epsilon, \delta)$-disentangler and \Cref{lem:qma_disentangler_completeness_soundness} implies the existence of a $\mathsf{QMA}$ verifier that also makes $1$ query, takes as input a $m$-qubit witness, and has completeness $1 - \delta$ and soundness $1/3 + \epsilon$.  Since $\epsilon$ and $\delta$ are constants, we can apply \Cref{thm:qma-lower-bound} to get a lower bound of $m = \Omega(\sqrt{D})$, as desired.  
\end{proof}

In \Cref{sec:improved_disentangler} we apply parallel amplification to get a bound in the general $(\epsilon, \delta)$ setting.

\begin{restatable}{corollary}{disentangleramp}
\label{thm:no-disentanglers-amp}
Fix constants $\eps,\delta\geq 0$ such that $2/3 < \eps+\delta<1$ and let $\eta = 1 - \epsilon - \delta$.  Then, every $(\epsilon, \delta)$-disentangler with output dimension $D$ requires
\[
m
=
\Omega_{\eps,\delta}\left(D^{1/\lceil \log_{4/3}(2/\eta)\rceil}\right)
\]
input qubits.
\end{restatable}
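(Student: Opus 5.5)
The plan is to amplify the one-query $\QMA(2)$ verifier of \cref{prop:qma2-upper-bound} in parallel, so that its completeness--soundness gap exceeds $\eps+\delta$. The price is a larger output dimension for the disentangler. After that, \cref{lem:qma_disentangler_completeness_soundness} and \cref{thm:qma-lower-bound} finish as in the preceding corollary.

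Concretely, fix $k\coloneqq\lceil\log_{4/3}(2/\eta)\rceil$ and consider the $k$-fold tensor power of the instance. We want a $\QMA(2)$ verifier for $\AES_{D_0}$ whose two witnesses each live in $(\C^{D_0})^{\otimes k}$, so the output dimension is $D'=D_0^k$. Given a product witness $\ket{\alpha}\otimes\ket{\beta}$ with $\ket\alpha=\ket{\alpha_1\cdots\alpha_k}$ and $\ket\beta=\ket{\beta_1\cdots\beta_k}$, the verifier measures $P$ on each pair $(\alpha_i,\beta_i)$ using $k$ queries. It accepts iff at least one outcome lands in $P$. Completeness stays $1$, since the honest provers send $k$ copies of the product state.

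For soundness, the first idea is to use that every pair $(\alpha_i,\beta_i)$ is product, so each test rejects with probability at least $2/3$ and all tests reject with probability at least $(2/3)^k$. But that amplifies in the wrong direction. The correct rule is the \textbf{AND} rule: accept iff every measurement outcome lands in $P$. Soundness is then $\sup\prod_i\bra{\alpha_i\beta_i}P\ket{\alpha_i\beta_i}$. Here is the subtle point: the witnesses are product across the bipartition, but each may be entangled across the $k$ copies. So soundness is $h_{\rm Sep}(P^{\otimes k})$ across $(A^k\mid B^k)$. This is the main obstacle, because product value need not be multiplicative.

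I would first try to bound it directly for $P=\Pi^-_d$. Write $\Pi_d^-\preceq \Pi^{d,2}_{\rm anti}\otimes\Pi^{d,2}_{\rm anti}$ and use the explicit formula \eqref{eq:asym_expr}. The target is a bound of the form $h_{\rm Sep}((\Pi^-_d)^{\otimes k})\le(3/4)^k\cdot 2$, or something like $1-(1-(3/4)^k)$ sufficient so that $s_k\le \eta/2$ with $1-s_k>\eps+\delta$. The base $4/3$ in the exponent suggests the intended per-copy bound is effectively $3/4$ rather than $1/3$, which hints at a cruder but multiplicative bound: the overlap of any state that is product across $A^k\mid B^k$ with $(\Pi^-_d)^{\otimes k}$ is at most $(3/4)^k$. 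One route is a one-way LOCC / symmetric-extension style argument, noting that $\Pi^-_d$ has a $k$-extendible structure because antisymmetric states are far from separable in a way that tensorizes. If a direct proof fails, I would fall back on the known multiplicativity of the antisymmetric (and related) projectors' product value under tensor powers.

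Given $s_k\le 2(3/4)^k\le\eta$, so that $1-s_k-(\eps+\delta)>0$ is a constant, \cref{lem:qma_disentangler_completeness_soundness} yields a $\QMA$ verifier with $k=O(1)$ queries, an $m$-qubit witness, completeness $1-\delta$ and soundness $s_k+\eps<1-\delta$. \cref{thm:qma-lower-bound} then gives $m=\Omega(\sqrt{D_0})$. With $D=D'=D_0^k$, this is $m=\Omega(D^{1/(2k)})$. This is consistent with the stated $D^{1/\lceil\log_{4/3}(2/\eta)\rceil}$ up to the constant in the exponent; I would adjust the repetition count or the base to match exactly.
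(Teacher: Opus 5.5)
The gap is the soundness bound for the repeated test. Your outline matches the paper's: AND-rule parallel repetition, then \cref{lem:qma_disentangler_completeness_soundness} and \cref{thm:qma-lower-bound}. You also correctly see that each prover's witness may be entangled across the $k$ copies, so soundness is $h_{\rm Sep}((\Pi_d^-)^{\otimes k})$ across $A^k\mid B^k$. But you never prove that this product value decays, and your fallbacks do not work.

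\begin{itemize}
\item The product value of antisymmetric projectors is \emph{not} multiplicative. We have $h_{\rm Sep}(\Pi^{d,2}_{\rm anti})=\tfrac12$. Yet if each prover sends a maximally entangled state between its own two copies, then $h_{\rm Sep}((\Pi^{d,2}_{\rm anti})^{\otimes 2})\ge\tfrac12-\tfrac1{2d}$.
\item The domination $\Pi_d^-\preceq\Pi^{d,2}_{\rm anti}\otimes\Pi^{d,2}_{\rm anti}$ as used in \cref{lem:product-value} puts the two projectors on registers $12$ and $34$, i.e.\ locally on each side. That is a product projector with product value $1$, so it gives nothing across the cut.
\item The extendibility remark is not an argument.
\end{itemize}

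The missing idea is to dominate by antisymmetric projectors on the \emph{cross-cut} pairs. Since $R_{13}\Pi_d^-=R_{24}\Pi_d^-=-\Pi_d^-$, you get $\Pi_d^-\preceq\Pi^{d,2}_{{\rm anti},13}\,\Pi^{d,2}_{{\rm anti},24}$. Hence $(\Pi_d^-)^{\otimes k}\preceq(\Pi^{d,2}_{\rm anti})^{\otimes 2k}$, with every factor straddling $A^k\mid B^k$. Now apply the Christandl--Schuch--Winter bound $h_{\rm Sep}((\Pi^{d,2}_{\rm anti})^{\otimes n})\le(3/4)^{n/2}$ (Lemma~12 of \cite{christandl2012entanglement}, used in \cref{lem:qma2_amplification}) with $n=2k$. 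This gives soundness $(3/4)^k\le\eta/2$ for $k=\lceil\log_{4/3}(2/\eta)\rceil$, which is where the base $4/3$ comes from. The simulated $\QMA$ verifier then has completeness $1-\delta$ and soundness at most $\eps+\eta/2$, a constant gap of $\eta/2$. Your hedged bound $2(3/4)^k\le\eta$ would leave margin only $\eta-2(3/4)^k$, which is $0$ when $\log_{4/3}(2/\eta)$ is an integer.

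On the exponent, your computation $k^2(m+1)=\Omega(\sqrt{D_0})=\Omega(D^{1/(2k)})$ is exactly what this route yields. ``Adjusting the repetition count or base'' is not an available step. With the $(3/4)^k$ soundness bound you need $k>\log_{4/3}(1/\eta)$ copies just to keep a positive gap. For $\eta<1/3$, twice that already exceeds $\lceil\log_{4/3}(2/\eta)\rceil$, so retuning $k$ cannot absorb the square root in \cref{thm:qma-lower-bound}. The paper's own proof reaches $D^{1/k}$ only by substituting $D'=D^{1/k}$ into $\Omega(\sqrt{D'})$ and dropping the square root. So neither argument supports the stated exponent. What this method actually proves is $m=\Omega_{\eps,\delta}\bigl(D^{1/(2\lceil\log_{4/3}(2/\eta)\rceil)}\bigr)$, and you should state that rather than promise a fix.
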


To summarize the results in this section, when $\eps+\delta<2/3$, we obtain
\[
m=\Omega_{\eps,\delta}(\sqrt D)\,.
\]
This is essentially optimal (up to a $\log(D)$ factor) in light of recent work of Jeronimo, Wu, and
Xu~\cite{JWX26}. For every fixed
$\eps\in(0,1)$, they construct an $(\eps,0)$-disentangler with local
output dimension $D$ using
\[
m=O_\eps(\sqrt D\log D)
\]
input qubits. 
When $\epsilon + \delta > 2/3$, we get a lower bound that scales as $D^{1/\log(\eta^{-1})}$, where $\eta$ is the difference between $1$ and $\epsilon + \delta$.  This scaling might be the correct form for small $\eta$, but our bound is almost certainly not tight up to constants. The proof could be improved to achieve a slightly improved bound (in some regimes) of $D^{1/\left(\lceil 1/\eta\rceil - 1\right)}$ by modifying the construction to use antisymmetric subspace projectors on different sized registers; however, this is also probably not tight. We leave the task of finding the optimal upper and lower bounds on disentanglers as interesting future work.

\section{Conclusion}
\label{sec:conclusion}

\paragraph{What was needed to obtain this result. } Our paper's moral starting point is the work of She and Yuen~\cite{SY22}. There, the authors made the observation that for any unitarily invariant property, a $q$-query $\QMA$ verifier reduces to the polynomial,
\begin{equation}
    \label{eq:conc_sy_form}
    F(X) = \sum_{t = 0}^{2q} \sum_{\sigma, \tau \in S_t} \tr[(R_\pi \otimes R_\sigma)X^{\otimes t} ]\,.
\end{equation}
To conclude the proof, two other observations are necessary:
\begin{enumerate}
    \item Instantiate $X$ above as the antisymmetric subspace projector $\Pi_d^-$. Then, the expression $\tr[(R_\pi \otimes R_\sigma)(\Pi_d^-)^{\otimes t} ]$ yields a univariate polynomial in $d$.
    \item Extending the resulting polynomial to \emph{negative} formal dimension $d < 0$ yields the same polynomials but evaluated on \emph{symmetric} subspace projectors.
\end{enumerate}
These two steps are what allow us to apply standard polynomial lower bounds to obtain the $\QMA$
lower bound. As mentioned in the introduction, moving from Step 1 to Step 2 above also requires the
somewhat subtle final choice to set the $X$ as \emph{two} copies of the antisymmetric subspace
projectors (in the proof, we use the more restrictive choice $\Lambda^4 \CC^d \subseteq \Lambda^2
\CC^d \otimes \Lambda^2 \CC^d$ to obtain a smaller product value). Nonetheless, it seems the primary missing
ingredient towards a full oracle separation was instantiating the construction of
\cite{SY22} with the symmetric and antisymmetric subspaces. Thus, we believe that subsequent to the
publication of \cite{SY22} most of the technical ingredients to construct the oracle separation
were present. However, putting together the various ingredients required familiarity with~\cite{SY22}, one's membership in the ``Church of the Symmetric Subspace'' \cite{harrow2013}, as well as
intuition to regard the symmetric subspace as a ``negative'' dimension antisymmetric subspace
\cite{penrose1971applications}.

Although these ingredients (particularly the first two) are not unknown in quantum information,
connecting these pieces to formulate a complete proof is precisely a strong point of
generative AI models. We speculate that this contributed to the relative ease with which ChatGPT 5.6
Sol was able to produce the correct scaffolding of a proof with limited interaction.

\paragraph{What did we learn. } It is not always straightforward to interpret the meaning of an oracle separation, especially a quantum oracle separation~\cite{agarwal2026nonstandard}.  We can split our discussion into three parts: lessons that would be true of any quantum oracle separation between $\QMA(2)$ and $\QMA$, those that we learn from having arrived at our separation using the polynomial method, and those that are specific to our construction.

Rather than being definitive evidence that $\QMA(2)$ is not $\QMA$, quantum oracle separations should be interpreted more as ``barriers'' to certain reductions from $\QMA(2)$ to $\QMA$.  One way to interpret the canonical $\QMA(2)$ complete problem (sparse separable Hamiltonian~\cite{chailloux2012complexity}) is as follows: Given an order $4$ tensor $T$ with entries $T_{i, j, k, \ell}$, determine its injective $2$-norm, that is $\max_{\norm{\psi}_{2} = 1} \langle T , \psi^{\otimes 4}\rangle$~\cite{BFHKSZ2012}. We can also talk about the flattening of the tensor $T$ to a matrix, $T_{\mathrm{flat}} = \sum_{i, j, k, \ell} T_{i, j, k, \ell} \ket{i, j}\!\!\bra{j, \ell}$.  
In some sense, separating $\QMA(2)$ from $\QMA$ relative to a quantum oracle tells us that finding the injective $2$-norm of $T$ is extremely hard, \emph{if} you only get access to $T$ through its matrix flattening, $T_{\mathrm{flat}}$.  

To better understand what this means, it is useful to think about what our proof says about the differences between matrices and tensors.  
At a high level, our proof exploits the fact that matrices can be diagonalized into eigenvectors, which gives a low-degree approximation to the spectral norm (via the trace power method, or in-place amplification) that does not exist for the injective $2$-norm.  
This means that algorithms that get oracle access to the matrix flattening are (in some sense) not sensitive enough to changes in the entries of the matrix, whereas algorithms solving the sparse separable Hamiltonian problem must be extremely sensitive to certain changes in the structure of the underlying tensor.   
In some sense, this is not a new phenomenon.  The idea that the injective $2$-norm does not have a low-degree polynomial approximation appears in many works on problems like tensor PCA and sum-of-squares algorithms~\cite{hopkins2015tensor}. 
We also point out that this oracle separation very much mirrors the seminal work of Aaronson and Kuperberg, who provided the first quantum oracle separation between $\QMA$ and $\mathsf{QCMA}$ \cite{AK07}.  
In their work, the authors argued (via a geometric argument) that polynomial-sized classical strings are not sensitive enough to changes in the entries of a quantum state to describe a random state. 
While their work was extremely influential and impressive, we believe that both their work and ours serve to formalize and recast folklore knowledge in the language of quantum complexity theory, rather than teach us something fundamentally new about the nature of quantum proofs.  

One additional lesson from the construction of this paper is that it pinpoints a very specific way in which tensor and matrix structure can differ.
In particular, this paper shows that switching between symmetric and antisymmetric subspace projectors is a smooth transition in terms of the matrix (in that there is a single low-degree polynomial that interpolates between them), but yields a very sharp transition when viewed as a tensor.
This surprising fact has been noted in other works on mathematical physics (see, for instance \cite{king1971dimensions, penrose1971applications,dunne1989negative, keppler2023duality,elvang2005diagrammatic}, among others), but to the best of our knowledge has not been noticed by the computer science community. As a result, this observation gives a potentially exciting way to solve problems related to higher-order tensors. 

\paragraph{Future directions. } We believe this work is only the beginning of a long investigation into the nature of un-entanglement and unitary property testing.  We hope that researchers continue to be interested in understanding the complexity class $\QMA(2)$, and want to propose a few directions that we believe will lead to interesting insights about it.
\begin{enumerate}
    \item \textbf{A classical oracle separation?} Perhaps the most natural follow-up to this line of work is to search for a classical oracle separation.  This work heavily exploits the sensitivity of quantum algorithms to the entries of a matrix, when viewed as complex numbers.  However, similar to work on $\QMA$ versus $\mathsf{QCMA}$, classical oracle access to a tensor is fundamentally different than quantum oracle access to it.  In particular, while changing a single entry of the matrix would lead to a small change in the underlying matrix, it leads to a huge change in the classical description of the matrix, meaning that a classical oracle separation cannot exploit the same kinds of arguments used in this work.
    
    We note that the final resolution of the $\QMA$ versus $\QCMA$  problem ultimately relied on a completely different idea (based on the sampling reduction of \cite{BHNZ26}) than the quantum oracle separation (conditioning on a ``popular" witness) of \cite{AK07}. Indeed, many follow-up works focused on dequantizing \cite{AK07} directly and subsequently ran into technical difficulties. The same may hold true for the $\QMA$ versus $\QMA(2)$ question: it is clear that the classical oracle separation cannot use a lot of the powerful invariant theory that was leveraged here, and our oracle construction does not have an obvious classical analogue. We thus believe that the problem of finding a classical oracle separation between $\QMA(2)$ and $\QMA$ is a fundamentally challenging and interesting problem, and likely to require completely different ideas than those presented in this work. 

    Given the history of the $\QMA$ versus $\QCMA$ problem~\cite{BHNZ26,BHV26}, we speculate that examining the differing operational characteristics of $\QMA$ and $\QMA(2)$ may prove useful for separating the two classes relative to a classical oracle (or even in the unrelativized setting). In addition to the fact that $\QMA$ enjoys a strong form of in-place soundness amplification that seems to be missing from $\QMA(2)$, we note that prior literature \cite{ABD+08,BT12} has shown that $\QMA(2)$ allows for the use of surprisingly succinct proofs (in a way that we suspect should not hold for $\QMA$). Exploiting these distinguishing behaviors seems crucial both to dequantizing our separation and to understanding the broader relationship of $\QMA(2)$ with other complexity classes.
    \item \textbf{More unitary property testing? } Another interesting direction is to return to the original inspiration behind \cite{SY22}, and study other problems related to unitary property testing.  The construction from this paper provides a very interesting way to smoothly transform (locally-invariant) matrices across a sharp boundary, and may be useful for understanding other problems in unitary property testing.  Concretely, \cite{SY22} leaves open the problem of finding tight bounds for the recurrence time and entropy estimation problems, as well as finding interesting problems related to unitaries that are invariant under conjugation by various groups (discussed in \cite{Pro76}).  
    \item \textbf{State property testing? } One can formulate an analogue of $\QMA$ and $\QMA(2)$ for state property testing; these are \textsf{propQMA} and $\mathsf{propQMA}(2)$ \cite{jeronimo2024coherence}. In addition to copies of the target state, you are also provided with unentangled proof states. A single proof is not helpful, as the de-Merlinization of \cite{aaronson2005qma} can be used to remove a single proof state, but this argument does not work for $k > 1$. Could the techniques in this paper be used to exhibit a property which has exponential copy-complexity, yet is efficiently testable given $k \geq 2$ copies of a proof? As a first attempt, one could consider the mixed states proportional to the symmetric and antisymmetric subspace projectors. Interestingly, the situation is a reversal of our oracle separation, where the $\QMA(2)$ algorithm is trivial but the $\QMA$ lower bound requires the most work. For property testing, the normalization makes the \textsf{propQMA} lower bound straightforward, but there is no longer a clear $\textsf{propQMA}(k)$ algorithm for this task.
    \item \textbf{UniqueQMA and approximate counting?} Using the same proof techniques as in this paper, one can show a \emph{quantum} oracle separation between $\QMA$ and $\mathsf{uniqueQMA}$, with the northstar problem in this setting being a classical oracle separation between $\QMA$ and $\BQP^{\mathsf{uniqueQMA}}$. This is not a quantitative improvement over \cite{AHHN24} which exhibits a classical oracle separation between $\QMA$ and $\mathsf{uniqueQMA}$, but the proof is simpler. 
    We consider the ``no'' case to be the $0$-dimensional symmetric subspace, and the ``yes'' instances to be symmetric subspaces of dimension at least $2$ (this is a subfamily of the \textsf{EmptyNonEmpty} oracle problem from Anshu et al.). There is one subtlety: the polynomial constructed for the OR bound corresponds to the \emph{trace} of the acceptance operator. This means in the ``yes'' case, the value could be exponentially large. The proof of \Cref{thm:qma-lower-bound} resolves this issue by including only a single, appropriately scaled ``yes'' case, but this does not work for $\mathsf{uniqueQMA}$. Instead, the unique promise is crucially used to bound the trace by a constant; this is the differentiating feature between $\QMA$ and $\mathsf{uniqueQMA}$, which yields the separation. The work of \cite{AHHN24} also constructs oracle separations between approximate counting and $\QMA$, as well as $\QMA^\QMA$. Interestingly, approximate counting is related to a \emph{classical} oracle separation between $\QMA$ and $\QMA(2)$; more precisely, if one could show that approximate counting was in $\QMA(2)$, this, combined with the existing $\QMA$ lower bounds \cite{AKKT19} would yield a classical oracle separation between $\QMA$ and $\QMA(2)$! Thus, understanding the precise relationship between approximate counting and $\QMA(2)$ is an intriguing line of research.
\end{enumerate}

\printbibliography

\appendix

\section{Diagonalization argument}

We use the query lower bound to construct the oracle, allowing us to prove \cref{thm:oracle-separation}.
The idea is standard: enumerate all polynomial-time $\QMA$ oracle verifiers and defeat them one at a time. 
At stage $i$, we choose a fresh input length $n_i$ and use \cref{thm:qma-lower-bound} to choose the $n_i$-th oracle component on which the $i$-th verifier fails.

\oraclesep*

\begin{proof}
We will construct an input-length-indexed unitary oracle $\calO=(\calO^{(n)})_{n\geq1}$.
The $n$-th component acts on one control qubit and two $n$-qubit registers.
For a projector $P_n$ on $\C^{2^n}\otimes\C^{2^n}$,
we set
\[
\calO^{(n)} = \calO_{P_n} = \proj{0}\otimes I + \proj{1}\otimes(I-2P_n)\,.
\]
We now construct the projectors $P_n$ together with a unary language
$L^\calO$.

Enumerate all polynomial-time $\QMA$ oracle verifiers $V_1, V_2, \dots$ together with explicit polynomial bounds on their running time. 
For each $V_i$, let $m_i(n)$ and $q_i(n)$ be polynomial bounds on its witness length and number of oracle queries on input $1^n$, respectively.
We also let $a_i(n)$ be a polynomial bound on the largest oracle component that $V_i$ can query on input $1^n$. Such a bound must exist because the $r$-th oracle component itself acts on $2r+1$ qubits.

We construct the oracle in stages. 
After stage $i$, we maintain a number $B_i$ such that every oracle component of index at most $B_i$ has been permanently fixed. 
Initially, set  $B_0=0$.
Suppose stages $1,\ldots,i-1$ have been completed. We first choose a fresh
length $n_i>B_{i-1}$.
By \cref{thm:qma-lower-bound}, there is a constant $C>0$ such that any
$\QMA$ verifier for $\AES_{2^n}$ with completeness $2/3$ and soundness
$1/3$, using an $m$-qubit witness and $q$ oracle queries, satisfies
\[
q^2(m+1)\geq C \sqrt{2^n}
\]
for all sufficiently large $n$.
We choose $n_i$ sufficiently large that
\[
q_i(n_i)^2\bigl(m_i(n_i)+1\bigr) < C N(\sqrt{2^{n_i}})\,.
\]
where $N(\sqrt{2^n})$ is the function defined in \Cref{eq:n_choice}. Such a choice is always possible because the left-hand side grows polynomially
in $n_i$, whereas the right-hand side grows exponentially in $n_i$. 

We now fix all other oracle components that $V_i$ could access on this input. Set
\[
B_i \coloneqq \max\Braces{ B_{i-1}, a_i(n_i), n_i }\,.
\]
For every still-undefined component of index at most $B_i$, except for
$n_i$ itself, set $P_n=0$, so that $\calO^{(n)}=I$.
At this point, every oracle component that $V_i$ can query on input
$1^{n_i}$ has been fixed except $\calO^{(n_i)}$.

Thus, with all other queried oracle components fixed, $V_i$ on input
$1^{n_i}$ becomes a verifier for the promise problem
$\AES_{2^{n_i}}$. It remains only to choose the projector
$P_{n_i}$.
By our choice of $n_i$ and \cref{thm:qma-lower-bound}, $V_i$ cannot satisfy completeness $2/3$ and soundness $1/3$ for every
$ P\in \calY_{2^{n_i}} \cup \calN_{2^{n_i}}$.
Hence there exists some such projector $P_{n_i}$ on which $V_i$ fails.
Fix $P_{n_i}$ to be such a projector, and define
\[
1^{n_i}\in L^\calO \qquad\Longleftrightarrow\qquad P_{n_i}\in\calY_{2^{n_i}}\,.
\]
This completes the construction at the $i$-th stage.
Note that all oracle components of index at most $B_i$ are now permanently fixed.
Since $n_i>B_{i-1}$ and $B_i\geq n_i$, we have $B_i\to\infty$.
Thus every oracle component is eventually fixed.
Since $V_i$ cannot query any component of index larger than
$a_i(n_i)\leq B_i$ on input $1^{n_i}$, no later stage can change its
computation on this input. Thus $V_i$ remains incorrect on $1^{n_i}$ in
the final oracle.
We reiterate that the promise problem $\AES_D$ is only used at a sequence of selected input lengths. At all other lengths we set $P_n=0$ and declare $1^n$ to be a NO instance of the resulting unary language.

To complete the oracle, we repeat this construction for every $i$. 
By construction, for every $i$, the verifier $V_i$ fails to decide
$L^\calO$ on input $1^{n_i}$. 
Since every polynomial-time $\QMA$ oracle verifier appears in the enumeration, no such verifier decides $L^\calO$. Therefore,
\[
L^\calO\notin\QMA^\calO.
\]

All that remains is to show that $ L^\calO\in\QMA(2)^\calO$.
Following \cref{prop:qma2-upper-bound}, there is a single uniform verifier. 
On input $1^n$, the two provers each send an $n$-qubit state, and the verifier uses one query to $\calO^{(n)}$ to perform the measurement $\{P_n,I-P_n\}$, accepting on outcome $P_n$.

If $1^n\in L^\calO$, then $P_n$ is a rank-one product projector, so the provers can make the verifier accept with probability $1$. 
If $1^n \notin L^{\calO}$, then either   
\[
P_n\in\calN_{2^{n}} \qquad \text{or} \qquad P_n = 0\,.
\]
In the former case, \cref{prop:qma2-upper-bound} implies that every product witness is
accepted with probability at most $1/3$. 
In the latter case, the verifier rejects with probability $1$.
Thus, we have established 
\[
L^\calO\in\QMA(2)^\calO\,.
\]
Combining with $L^\calO \not\in \QMA^{\calO}$, we conclude that
\[
\QMA^\calO\neq\QMA(2)^\calO\,.\qedhere
\]
\end{proof}

\section{Improved disentangler bounds}
\label{sec:improved_disentangler}

We apply parallel amplification to the $\QMA(2)$ verifier to obtain a (admittedly loose) lower bound for arbitrary $\epsilon$, $\delta$.
\begin{lemma}
\label{lem:qma2_amplification}
    For every $s \leq 1$ and $k = \lceil\log_{4/3}(1/s)\rceil$, there is a $\mathsf{QMA}(2)$ verifier that makes $k$-queries to an oracle, takes as input a pair of un-entangled witnesses each of dimension $D^{k}$, and has completeness $1$ and soundness $s$. 
\end{lemma}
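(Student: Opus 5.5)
The plan is to use plain parallel repetition of the one-query verifier of \cref{prop:qma2-upper-bound}. Each prover sends a state on $(\C^D)^{\otimes k}$. The verifier measures $\{P,I-P\}$ on each of the $k$ pairs $(A_i,B_i)$ with one query per pair, via the Hadamard test, and accepts iff all $k$ outcomes are $P$. This is the measurement $P^{\otimes k}$ across the cut $A_1\cdots A_k \mid B_1\cdots B_k$. It uses $k$ queries and witnesses of dimension $D^k$.

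\textbf{Completeness.} In the YES case $P=(U\otimes V)(\psi\otimes\phi)(U\otimes V)^\dagger$. The provers send $(U\ket\psi)^{\otimes k}$ and $(V\ket\phi)^{\otimes k}$, which are accepted with probability $1$.

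\textbf{Soundness.} This is the whole content of the lemma. Product value is not multiplicative in general: a prover's state may be entangled across the copies $A_1,\dots,A_k$, and measuring $P$ on one pair can entangle the remaining registers. So we cannot simply raise $1/3$ to the $k$-th power. This also explains the weaker constant $4/3$. The goal is
\[
h_{\rm Sep}\bigl(P^{\otimes k}\bigr)\le (3/4)^k
\quad\text{for } P\in\calN_D .
\]
Then $k=\lceil\log_{4/3}(1/s)\rceil$ gives soundness $s$. Local unitaries and zero-padding do not change $h_{\rm Sep}$, so it suffices to treat $P=\Pi_d^-$ with $d\ge4$.

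\textbf{Step 1: a dominating operator with multiplicative product value.} I would look for an operator $W$ on $A\otimes B$ with $\Pi_d^-\preceq W$ such that $h_{\rm Sep}(W^{\otimes k})\le (3/4)^k$ is provable. Then $P^{\otimes k}\preceq W^{\otimes k}$, because tensor products of ordered positive operators stay ordered.

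The ingredients come from the proof of \cref{lem:product-value}. First, $\Pi_d^-\preceq \Pi^{d,2}_{\rm anti}\otimes\Pi^{d,2}_{\rm anti}$. Second, on $\wedge^2\otimes\wedge^2$ we have
\[
6\,\Pi_d^- = I - (R_{13}+R_{14}+R_{23}+R_{24}) + R_{13}R_{24}.
\]
Third, $\Pi_d^-\preceq \tfrac12(I+F)$ with $F=R_{13}R_{24}$, because $F$ has sign $+1$ on the range of $\Pi_d^-$.

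I would not use the partial-transpose norm as the multiplicative quantity. A quick estimate suggests that the partial transpose of $F$ contributes an eigenvalue of order $d^2$, so that route is probably hopeless.

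\textbf{Step 2 (my first attempt): a sequential bound.} Write the witness as $\ket{a}\ket{b}$ and condition on copies $1,\dots,i-1$ accepting. The posterior on $A_iB_i$ together with the remaining registers is no longer product, but it is obtained from a product state by applying $\bigotimes_{j<i}\Pi^-$ on the other copies.

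Expand each $\Pi^-$ into the six coset terms above and expand $\bigotimes_j \Pi^-$ termwise. Every term is then a permutation operator acting across the cut. Its expectation in $\ket a\ket b$ is a trace of products of marginals of $a$ and $b$, and these traces are all nonnegative. The negative single-swap terms can only help; this mirrors the $k=1$ proof. Bounding the sum of the positive terms $\prod_j \tfrac16(I+R_{13}R_{24})$ over subsets gives a per-copy factor of $\tfrac{2}{6}\cdot(\text{slack})$. I expect the slack from cross-copy entanglement to cost at most a factor $\tfrac94$ per copy, which would land at $\tfrac34$.

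\textbf{Main obstacle.} The obstacle is controlling the cross terms in which different copies mix $I$ and $F$ and the prover's states are entangled across copies. In that case the positive terms no longer factor. I would try to bound them by Cauchy--Schwarz on the operators $F^S$, using $\|F^S\|\le1$.

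If this fails, the fallback is the Harrow--Montanaro-style argument. Either prover's state is close to a product across copies, in which case $1/3$ per copy applies. Or it is far from product across copies, in which case the antisymmetric structure forces a constant-factor loss per copy. The fallback accepts a worse constant, which $4/3$ seems designed to absorb.
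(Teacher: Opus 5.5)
\textbf{Gap: the soundness bound is never proved.} Your verifier and completeness argument are the paper's. You also correctly identify $h_{\rm Sep}(P^{\otimes k})\le(3/4)^k$ as the entire content of the lemma, but the proposal does not prove it.

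Neither dominating operator you list can serve as $W$. The operator $\Pi^{d,2}_{\rm anti}\otimes\Pi^{d,2}_{\rm anti}$ on the pairs $(12),(34)$ acts entirely within each side of the cut. The operator $\frac12(I+F)$ is the symmetric projector on $A\otimes B$. Both have product value $1$.

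Step~2 breaks as soon as $k\ge 2$. A product of single swaps taken from two different copies has coefficient $(-1)^2/36=+1/36$ and nonnegative expectation, and it is not of the form $F^S$. So the single-swap terms no longer ``only help,'' and they can dominate. For example, take $k=2$ and let both provers send the normalized state $\sum_{i<j}\ket{e_{ij}}\ket{e_{ij}}$, where $\ket{e_{ij}}=(\ket{ij}-\ket{ji})/\sqrt2$ and the two factors sit in the two copies. Then:
\begin{itemize}
\item each of the sixteen two-swap terms has expectation close to $1/4$, the purity of a two-register marginal with one register from each copy;
\item the terms $I\otimes I$ and $F\otimes F$ contribute $1/36$ each;
\item every other term is $O(1/d)$.
\end{itemize}
The overlap with $P^{\otimes 2}$ is therefore about $6/36=1/6>1/9$. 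Any termwise argument must control these cancellations, and your proposal gives no mechanism for doing so. The hoped-for slack of $9/4$ per copy and the Harrow--Montanaro fallback are guesses rather than steps.

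\textbf{Missing idea: the right domination.} Every vector in $\wedge^4\C^d$ is antisymmetric under $R_{13}$ and under $R_{24}$. Hence $\Pi_d^-\preceq Q_{13}\otimes Q_{24}$ with $Q_{ij}=\frac12(I-R_{ij})$. These are two two-register antisymmetric projectors, each straddling the cut $(12|34)$. It follows that $P^{\otimes k}\preceq(\Pi^{d,2}_{\rm anti})^{\otimes 2k}$, with all $2k$ factors across the cut.

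The paper then imports a known many-copy bound for the antisymmetric subspace, Lemma~12 of Christandl--Schuch--Winter: $n$ copies have product value at most $(3/4)^{n/2}$. Taking $n=2k$ gives $(3/4)^k$. This lemma is where the base $4/3$ comes from. Its exponent $n/2$ reflects exactly the non-multiplicativity you anticipated: two copies of $\wedge^2\C^d$ already have product value close to $1/2$, the same as one copy.

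The paper describes this step as ``similar to'' \cref{lem:product-value}. However, the bound applies only to the cross-cut pairs $(13),(24)$. It does not apply to the within-side pairs $(12),(34)$ used in that lemma and in your Step~1.
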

\begin{proof}
    The amplified $\mathsf{QMA}(2)$ verifier takes a witness $\ket{\psi}_{\reg A_1 \dots \reg A_k} \otimes \ket{\phi}_{\reg B_1 \dots \reg B_k}$, where each register $\reg A_i, \reg B_i$ has dimension $D$. Then, it uses its oracle to measure a projector $P$ $k$-many times on each pair $\reg A_i \reg B_i$.  This verifier makes $k$ oracle queries and takes as input a pair of registers of dimension $D^{k}$, as in the lemma statement.

    In the YES case, the separable value of a tensor product of many un-entangled subspaces is still $1$, so there is a proof that causes the verifier to accept with probability $1$.  In the NO case, we note that similar to the proof of \Cref{lem:product-value}, the projector $\Pi_{d}^{-, \otimes k}$ can be written as $\Pi_{d}^{-, \otimes k} (\Pi_{\mathrm{anti}}^{d, 2})^{\otimes 2k}$.  Lemma $12$ from \cite{christandl2012entanglement} shows that the separable value of $n$ copies of the antisymmetric subspace has a $h_{\mathrm{Sep}} \leq (3/4)^{n/2}$.  Setting $n = 2k$, and setting $k = \lceil\log_{4/3}(1/s)\rceil$ gives us the desired soundness.
\end{proof}

\disentangleramp*
\begin{proof}
    We first apply \Cref{lem:qma2_amplification} to the $\mathsf{QMA}(2)$ verifier from \Cref{prop:qma2-upper-bound} to get a $\mathsf{QMA}(2)$ verifier for the $\mathsf{AES}_{D}$ problem that has soundness $\eta / 2$ and completeness $1$.  Since our original verifier took a witness with dimension $D$, this new witness will be dimension $D^{\lceil \log_{4/3}(2 / \eta) \rceil}$, and the algorithm will make $\lceil \log_{4/3}(2 / \eta) \rceil$ queries.  Now we set $D' = D^{1/\lceil \log_{4/3}(2 / \eta) \rceil}$, so that the solving the $\mathsf{AES}_{D'}$ problem requires a witness of size $D$.  

    From \Cref{lem:qma_disentangler_completeness_soundness}, we now have that applying the disentangler to the $m$-qubit witness for the $\mathsf{QMA}$ verifier yields a $\mathsf{QMA}$ algorithm for the $\mathsf{AES}_{D'}$ problem with completeness-soundness gap $\eta / 2$ (which is a constant), making $O(\log(1/\eta))$ queries to the oracle.  Using \Cref{thm:qma-lower-bound}, we have the bound of
    \begin{equation*}
        \log(1/\eta)^2 (m+1) = \Omega(\sqrt{D'})\,.
    \end{equation*}
    Substituting $D' = D^{1/\lceil \log_{4/3}(2 / \eta) \rceil}$, we have that 
    \begin{equation*}
        m = \Omega\left(\frac{1}{\log(1/\eta)^2} D^{1/\lceil \log_{4/3}(2 / \eta) \rceil}\right)\,,
    \end{equation*}
    as desired.
\end{proof}

\end{document}